\definecolor{pastelorange}{HTML}{F79A3D}
\definecolor{pastelblue}{HTML}{2874ae}
\definecolor{pastelgreen}{HTML}{61B940}
\definecolor{pastelviolet}{HTML}{A370AB}
\definecolor{darkblue}{HTML}{1C4987}
\definecolor{chestnut}{HTML}{A24516}
\definecolor{darkgreen}{HTML}{426A5A}
\newtheorem{requirement}{Requirement}
\newcommand{\PastTSL}{PastTSL\@\xspace}
\newcommand{\pastTSL}{pastTSL\@\xspace}
\newcommand{\pastLTL}{pastLTL\@\xspace}
\newcommand{\stateMachine}{\mathcal{M}}
\newcommand{\winningRegion}{\mathcal{W}}
\newcommand{\nat}{\mathbb{N}}
\newcommand{\true}{\mathit{true}}
\newcommand{\false}{\mathit{false}}
\renewcommand{\phi}{\varphi}
\newcommand{\ldot}{\mathpunct{.}}
\newcommand{\set}[1]{\{#1\}}
\newcommand{\pow}[1]{2^{#1}}
\newcommand{\ap}{\mathit{AP}}
\newcommand{\Globally}{\LTLsquare}
\newcommand{\Yesterday}{\LTLcircleminus}
\newcommand{\WeakYesterday}{\LTLcircletilde}
\newcommand{\Since}{\LTLsince}
\newcommand{\Historically}{\LTLsquareminus}
\newcommand{\Once}{\LTLdiamondminus}
\newcommand{\cell}[1]{{\normalfont{\texttt{#1}}}}
\newcommand{\update}[2]{\llbracket #1 \leftarrowtail #2 \rrbracket}
\newcommand{\values}{\mathcal{V}}
\newcommand{\functions}{\mathcal{F}}
\newcommand{\predTerm}{\tau_p}
\newcommand{\funcTerm}{\tau_f}
\newcommand{\predTerms}{\mathcal{T}_P}
\newcommand{\funcTerms}{\mathcal{T}_F}
\newcommand{\updateTerms}{\mathcal{T}_U}
\newcommand{\assign}{{\langle \hspace{-1pt}\cdot \hspace{-1pt}\rangle}}
\newcommand{\funcSymbols}{\Sigma_F}
\newcommand{\constSymbols}{\Sigma_F^0}
\newcommand{\predSymbols}{\Sigma_P}
\newcommand{\inputs}{\mathbb{I}}
\newcommand{\cells}{\mathbb{C}}
\newcommand{\cellAssignments}{\mathcal{C}}
\newcommand{\computation}{\varsigma}
\newcommand{\compStream}{\cellAssignments^\omega}
\newcommand{\initial}[1]{\mathit{init}_{#1}}
\newcommand{\inputStream}{\mathcal{I}^\omega}
\newcommand{\eval}{\eta_{\assign}}
\newcommand{\TSLSatisfaction}[1]{\models_{#1}}
\newcommand{\method}[1]{\texttt{\color{darkblue}#1}}
\newcommand{\pmethod}[2]{\method{#1}\texttt{(#2)}}
\newcommand{\field}[1]{\texttt{\color{chestnut}#1}}
\newcommand{\pfield}[2]{\field{#1}\texttt{(#2)}}
\newcommand{\parameter}[1]{\texttt{#1}}
\newcommand{\params}{\mathbb{P}}
\newcommand{\instan}{\mu}
\newcommand{\wini}[1]{\winningRegion_{#1}}
\newcommand{\selfU}{O_\mathit{self}}
\newcommand{\methodProps}{I_\mathit{call}}
\newcommand{\K}{\color{darkgreen!80}K}
\newcommand{\tool}{\textsc{SCSynt}\@\xspace}
\tiny\color{gray},
\lstdefinelanguage{Solidity}{
	keywords=[1]{anonymous, assembly, assert, balance, break, call, callcode, case, catch, class, constant, continue, constructor, contract, debugger, default, delegatecall, delete, do, else, emit, event, experimental, export, external, false, finally, for, function, gas, if, implements, import, in, indexed, instanceof, interface, internal, is, length, library, log0, log1, log2, log3, log4, memory, modifier, new, payable, pragma, private, protected, public, pure, push, require, return, returns, revert, selfdestruct, send, solidity, storage, struct, suicide, super, switch, then, this, throw, transfer, true, try, typeof, using, value, view, while, with, addmod, ecrecover, keccak256, mulmod, ripemd160, sha256, sha3}, 
	keywordstyle=[1]\color{darkblue}\bfseries,
	keywords=[2]{address, bool, byte, bytes, bytes1, bytes2, bytes3, bytes4, bytes5, bytes6, bytes7, bytes8, bytes9, bytes10, bytes11, bytes12, bytes13, bytes14, bytes15, bytes16, bytes17, bytes18, bytes19, bytes20, bytes21, bytes22, bytes23, bytes24, bytes25, bytes26, bytes27, bytes28, bytes29, bytes30, bytes31, bytes32, enum, int, int8, int16, int24, int32, int40, int48, int56, int64, int72, int80, int88, int96, int104, int112, int120, int128, int136, int144, int152, int160, int168, int176, int184, int192, int200, int208, int216, int224, int232, int240, int248, int256, mapping, string, uint, uint8, uint16, uint24, uint32, uint40, uint48, uint56, uint64, uint72, uint80, uint88, uint96, uint104, uint112, uint120, uint128, uint136, uint144, uint152, uint160, uint168, uint176, uint184, uint192, uint200, uint208, uint216, uint224, uint232, uint240, uint248, uint256, var, void, ether, finney, szabo, wei, days, hours, minutes, seconds, weeks, years},	
	keywordstyle=[2]\color{teal}\bfseries,
	keywords=[3]{block, blockhash, coinbase, difficulty, gaslimit, number, timestamp, msg, data, gas, sender, sig, value, now, tx, gasprice, origin},	
	keywordstyle=[3]\color{pastelviolet!80!black}\bfseries,
	identifierstyle=\color{black},
	sensitive=false,
	comment=[l]{//},
	morecomment=[s]{/*}{*/},
	commentstyle=\color{gray}\ttfamily,
	stringstyle=\color{red!70!black}\ttfamily,
	morestring=[b]',
	morestring=[b]"
}
\newcounter{specification}
\newcounter{solidity}
\begin{document}
\title{Reactive Synthesis of \\ Smart Contract Control Flows
}

\author{}
\institute{}
%
%
\author{Bernd Finkbeiner\inst{1}\orcidID{0000-0002-4280-8441} \and
Jana Hofmann\inst{2}\orcidID{0000-0003-1660-2949}\and
Florian Kohn\inst{1}\orcidID{0000-0001-9672-2398} \and
Noemi Passing\inst{1}\orcidID{0000-0001-7781-043X}}


\institute{CISPA Helmholtz Center for Information Security, Saarbrücken, Germany
\email{\{finkbeiner,florian.kohn,noemi.passing\}@cispa.de}\\
\and
Azure Research, Microsoft, Cambridge, UK\\
\email{t-jhofmann@microsoft.com}}
\maketitle              
\begin{abstract}
Smart contracts are small but highly error-prone programs that implement agreements between multiple parties.
We present a reactive synthesis approach for the automatic construction of smart contract state machines.
Towards this end, we extend temporal stream logic~(TSL) with universally quantified parameters over infinite domains.
Parameterized TSL is a convenient logic to specify the temporal control flow, i.e., the correct order of transactions, as well as the data flow of the contract's fields.
We develop a two-step approach that 1) synthesizes a finite representation of the -- in general -- infinite-state system and 2) splits the system into a compact hierarchical architecture that enables the implementation of the state machine in Solidity.
We implement the approach in our prototype tool \tool, which -- within seconds -- automatically constructs Solidity code that realizes the specified control flow.

\keywords{Reactive Synthesis \and Temporal Stream Logic \and Parameterized Synthesis \and Smart Contracts}
\end{abstract}
\subsubsection*{Acknowledgements.}
This work was supported by the European Research Council (ERC) Grant HYPER (No. 101055412) and by DFG grant 389792660 as part of TRR 248.
%
%
%
\section{Introduction}
Smart contracts are small programs that implement digital contracts between multiple parties.
They are deployed on the blockchain and thereby remove the need for a trusted third party that enforces a correct execution of the contract.
Recent history, however, has witnessed numerous bugs in smart contracts, some of which led to substantial monetary losses.
One critical aspect is the implicit state machine of a contract: to justify the removal of a trusted third party -- a major selling point for smart contracts -- all parties must trust that the contract indeed enforces the agreed order of transactions.

Formal methods play a significant role in the efforts to improve the trustworthiness of smart contracts.
Indeed, the \emph{code is law} paradigm is shifting towards a \emph{specification is law} paradigm~\cite{DBLP:journals/corr/abs-2205-07529}.
Formal verification has been successfully applied to prove the correctness of the implicit state machine of smart contracts, for example, by verifying the contract against temporal logic specifications~\cite{DBLP:conf/sp/PermenevDTDV20,DBLP:conf/sp/StephensFMLD21,nehai2018model} or a given state machine~\cite{DBLP:conf/vstte/0001LCPDBNF19}.
Other approaches model the control flow with state machines and construct Solidity code from it~\cite{zupan2020secure,DBLP:conf/fc/MavridouL18,DBLP:journals/insk/CiccioCDGLLMPTW19,DBLP:journals/spe/Lopez-PintadoGD19}. 
Synthesis, i.e., the automatic construction of Solidity code \emph{directly} from a temporal specification, has hardly been studied so far (except for a first step~\cite{DBLP:journals/corr/abs-1906-02906}, see related work).

In this paper, we study the synthesis of smart contracts state machines from temporal stream logic~(TSL), which we equip with universally quantified parameters.
TSL extends linear-time temporal logic~(LTL) with data cells and uninterpreted functions and predicates.
These features enable us to reason about the order of transactions as well as the data flow of the contract's fields.
To distinguish method calls from different callers, we extend the logic with universally quantified parameters.
For example, the following parameterized TSL formula expresses that every voter can only vote once and that a field \field{numVotes} is increased with every vote.
\begin{lstlisting}[numbers=none]
	(*$\forall$*)m. (*$\Globally$*)((*$\pmethod{vote}{m}$*) -> [[(*$\field{numVotes}$*) <- (*$\field{numVotes}$*)$\,$+$\,$1]] && (*$\Next \Globally\, $*)!$\,$(*$\pmethod{vote}{m}$*)) 
\end{lstlisting}
The above formula demonstrates the challenges associated with parameterized TSL synthesis.
First of all, a part of the formula restricts the allowed method calls, which are inputs in the synthesis problem.
To make specifications realizable, we restrict ourselves to safety properties, which we express in the past-time fragment of parameterized TSL.
Second, as the contract might interact with arbitrarily many voters, the above formula ranges over an infinite domain.
However, we need to find a finite representation of the system that can be translated into feasible Solidity code.

We tackle this challenge in two steps.
First, we translate the parameterized pastTSL formula to pastTSL to synthesize a finite representation of the system.
Unfortunately, we show that the realizability problem of pastTSL is undecidable, even without parameters.
As a remedy, we employ a sound approximation in LTL~\cite{FinkbeinerKPS19} to make synthesis possible.

In a second step, we split the resulting state machine into a hierarchical structure of smaller, distributed state machines.
This architecture can be interpreted as an infinite-state system realizing the original formula.
It also minimizes the number of transactions needed to keep the system up to date at runtime.

We implement the approach in our prototype \tool, which, due to the past-time fragment, leverages efficient symbolic algorithms.
We specify ten different smart contract specifications and obtain an average synthesis time of two seconds.
Our largest specification is based on Avolab's NFT auction~\cite{NFTReference} and produces a state machine with 12 states in 12 seconds.
To summarize, we
\begin{itemize}
	[leftmargin=15pt]
	\item show how to specify smart contract control flows in parameterized \pastTSL,
	\item prove undecidability of the general realizability problem of \pastTSL,
	\item and present a sound (but necessarily incomplete) synthesis approach for parameterized pastTSL formulas that generates a hierarchy of state machines to enable a compact representation of the system in Solidity.
\end{itemize}

\paragraph{Related Work.}
Formal approaches for smart contracts range from the automatic construction of contracts from state machines~\cite{DBLP:conf/fc/MavridouL18,MavridouLSD19}, over the verification against temporal logics~\cite{DBLP:conf/sp/PermenevDTDV20,DBLP:conf/sp/StephensFMLD21,nehai2018model} and state machines~\cite{DBLP:conf/vstte/0001LCPDBNF19,DBLP:journals/corr/abs-1812-08829}, to deductive verification approaches~\cite{dharanikota2021celestial,DBLP:conf/ndss/KalraGDS18}.
Closest to our work is a synthesis approach based on LTL specifications~\cite{DBLP:journals/corr/abs-1906-02906}.
The approach does not reason about the contract's data: neither about the current value of the fields, nor about parameters like the method's caller.
To quote the authors of~\cite{DBLP:journals/corr/abs-1906-02906}: the main challenge in the synthesis of smart contracts is ``how to strike a balance between simplicity and expressivity [...] to allow effective synthesis of practical smart contracts''.
In this paper, we opt for a more expressive temporal logic and simultaneously aim to keep the specifications readable.

TSL has been successfully applied to synthesize FPGA controllers~\cite{DBLP:journals/corr/abs-2101-07232} and functional reactive programs~\cite{DBLP:conf/haskell/Finkbeiner0PS19}.
To include domain-specific reasoning, TSL has been extended with theories~\cite{FinkbeinerHP22} and SMT solvers~\cite{RoderickTSLTheories}.
A recent approach combines TSL reactive synthesis with SyGus to synthesize implementations for TSL's uninterpreted functions~\cite{cispa3674}.
Parameterized synthesis has so far focused on distributed architectures parameterized in the number of components~\cite{JacobsB14,Party,KhalimovJB13,MarkgrafHLNN20}.
Orthogonal to this work, these approaches rely on a reduction to bounded isomorphic synthesis~\cite{JacobsB14,Party,KhalimovJB13} or apply a learning-based approach~\cite{MarkgrafHLNN20}.

\paragraph{Overview.}
We first provide some brief preliminaries on state machines, reactive synthesis, and TSL.
In \Cref{sec:paramTSL}, we introduce parameterized TSL and demonstrate how it can be used for specifying smart contract control flows.
Subsequently, we discuss the high-level idea and associated challenges of our synthesis approach in \Cref{sec:approach} and discuss synthesis from plain pastTSL in \Cref{sec:pastTSL_synthesis}.
demonstrate how to specify smart contracts using pure pastTSL and prove the undecidability of its realizability problem.
We proceed with the main part of the approach, a splitting algorithm for state machines, in \Cref{sec:paramSynt}.
Finally, we discuss the implementation of \tool and its evaluation in \Cref{sec:impl_eval}.

\section{Preliminaries}
\label{sec:prelims}
\label{sec:background}

We assume familiarity with linear-time temporal logic (LTL). A definition with past-time temporal operators can be found in~\cite{recursiveEvaluation,pastLTL}.
We only assume basic knowledge about smart contracts; for an introduction we refer to~\cite{Ethereum}.

\subsection{State Machines, Safety Properties and Reactive Synthesis}
\label{subsec:prelim_syn_sm}
We give a brief introduction to Mealy machines, safety properties, and reactive synthesis.
In this work, we represent smart contract control flows as \emph{Mealy state machines}~\cite{Mealy55}, which separate the alphabet into inputs $I$ and outputs $O$.
A Mealy machine $\stateMachine$ is a tuple $(S, s_0, \delta)$ of states $S$, initial state $s_0$, and transition relation $\delta \subseteq S \times I \cup O \times S$.
For a compact representation, we attach the outputs also to transitions, not to the states.
We call $\stateMachine$ \emph{finite-state} if both $\Sigma = I \cup O$ and $S$ are finite, and \emph{infinite-state} otherwise.
An infinite sequence $t \in \Sigma^\omega$ is a \emph{trace} of $\stateMachine$ if there is an infinite sequence of states $r \in S^\omega$ such that $r[0] = s_0$ and $(r[i],t[i],r[i+1]) \in \delta$ for all points in time $i \in \mathbb{N}$.
A finite sequence of states $r \in S^+$ results in a finite trace $t\in\Sigma^+$.

In this paper, we work with specifications that are \emph{safety properties}(see, e.g., \cite{KupfermanV99,DBLP:journals/tse/Lamport77}). A safety property can be equivalently expressed as a Mealy machine $\stateMachine$ that describes the set of traces that satisfy the property ($\stateMachine$ is called the \emph{safety region} of the property).
For a safety specification, the \emph{reactive synthesis} problem is to determine the \emph{winning region}, i.e., the maximal subset of its safety region such that for every combination of state and input, there is a transition into said subset.
A \emph{strategy} is a subset of the winning region such that in each state, there is exactly one outgoing transition for every input.

\subsection{Past-time Temporal Stream Logic}
PastTSL is the past-time variant of TSL~\cite{FinkbeinerKPS19}, a logic that extends LTL with cells that can hold data from a possibly infinite domain.
To abstract from concrete data, TSL includes uninterpreted functions and predicates.
\emph{Function terms~$\funcTerm \in \funcTerms$} are recursively defined by
\[
\funcTerm \Coloneqq \texttt{s}~ \mid ~f~ \tau^1_f \ldots ~\tau^n_f
\]
where \texttt{s} is either a cell $\cell{c} \in \cells$ or an input $i \in \inputs$, and $f \in \funcSymbols$ is a function symbol.
\emph{Constants} $\constSymbols \subseteq \funcSymbols$ are 0-ary function symbols.
\emph{Predicate terms~$\predTerm \in \predTerms$} are obtained by applying a predicate symbol $p \in \predSymbols$ with $\predSymbols \subseteq \funcSymbols$ to a tuple of function terms.
\PastTSL formulas are built according to the following grammar:
\[ 
\varphi, \psi \Coloneqq \neg \varphi \,\mid\, \varphi \land \psi \,\mid\, \Yesterday \varphi \,\mid\, \varphi \Since \psi \,\mid\, \predTerm \,\mid\, \update{\cell{c}}{\funcTerm}
\]
An \emph{update term} $\update{\cell{c}}{\tau_f} \in \updateTerms$ denotes that cell $\cell{c}$ is overwritten with $\tau_f$.
The temporal operators are called ``Yesterday'' $\Yesterday$ and ``Since''~$\Since$.
Inputs, function symbols, and predicate symbols a purely syntactic objects. To assign meaning to them, let $\values$ be the set of values with $\mathbb{B} \subseteq \values$.
We denote by $\mathcal{I} : \inputs \to \values$ the evaluation of inputs.
An \emph{assignment function} $\assign: \funcSymbols \rightarrow \functions$
assigns function symbols to functions $\functions = \bigcup_{n \in \nat} \values^n \rightarrow \values$.

The type $\cellAssignments = \cells \to \funcTerms$ describes an update of all cells.
For every cell $\cell{c} \in \cells$, let $\initial{\cell{c}}$ be its initial value.
The evaluation function $\eval: \compStream \times \inputStream \times \mathbb{N} \times \funcTerms \rightarrow \values$ evaluates a function term at point in time $i$ with respect to an \emph{input stream} $\iota \in \inputStream$ and a \emph{computation} $\computation \in \compStream$:
\begin{align*}
	&\eval(\computation,\iota,i,\cell{s}) \coloneqq \begin{cases}
		\iota~i~\cell{s} & \text{if } \cell{s} \in \inputs \\
		\initial{\cell{s}} & \text{if } \cell{s} \in \cells \land i = 0 \\
		\eval(\computation,\iota,i-1,\computation~(i-1)~\cell{s}) & \text{if } \cell{s} \in \cells \land i>0
	\end{cases} \\
	& \eval(\computation,\iota,i,f~ \tau_0 \dots \tau_{m-1}) \coloneqq \langle f \rangle ~ \eval(\computation,\iota,i, \tau_0) ~\dots ~\eval(\computation,\iota,i,\tau_{m-1})
\end{align*}
Note that $\iota~i~\cell{s}$ denotes the value of $\cell{s}$ at position $i$ according to $\iota$.
Likewise, $\computation ~ i~ \cell{s}$ is the function term that $\computation$ assigns to $\cell{s}$ at position $i$.
With the exception of update and predicate terms, the semantics of pastTSL is similar to that of LTL.
\allowdisplaybreaks
\begin{alignat*}{3}
		&\computation, \iota, t \TSLSatisfaction{\assign} \neg \varphi ~ &\quad \text{iff} \quad & ~ \computation, \iota, t \not\TSLSatisfaction{\assign} \varphi\\
		&\computation, \iota, t \TSLSatisfaction{\assign} \varphi \land \psi ~ &\quad \text{iff} \quad & ~ \computation, \iota, t \TSLSatisfaction{\assign} \varphi \text{ and } \computation, \iota, t \TSLSatisfaction{\assign} \psi\\
		&\computation, \iota, t \TSLSatisfaction{\assign} \Yesterday \varphi ~ &\quad \text{iff} \quad & ~ t > 0 \land \computation, \iota,  t-1 \TSLSatisfaction{\assign} \varphi\\
		&\computation, \iota, t \TSLSatisfaction{\assign} \varphi \Since \psi ~ &\quad \text{iff} \quad & ~ \exists ~ 0 \leq t' \leq t. ~ \computation, \iota, t' \TSLSatisfaction{\assign} \psi \text{ and }\\
		& & & \quad \forall t' < k \leq t. ~ \computation, \iota, k \TSLSatisfaction{\assign} \varphi \\
	&\computation, \iota, t \TSLSatisfaction{\assign} \update{\cell{v}}{\tau} ~ &\quad \text{iff} \quad & ~ \computation~t~\cell{v} \equiv \tau\\
	&\computation, \iota, t \TSLSatisfaction{\assign} p~ \tau_0 \dots \tau_{m} ~ &\quad \text{iff} \quad & ~ \eval(\computation,\iota,t,p~ \tau_0 \dots \tau_{m-1})
\end{alignat*}
We use $\equiv$ to syntactically compare two terms.
We derive three additional operators: $\WeakYesterday \varphi := \neg \Yesterday \neg \varphi$, $\Once \phi := \true \Since \phi$, and $\Historically \phi := \lnot \Once \lnot \phi$. 
The difference between $\Yesterday$ and ``Weak Yesterday'' $\WeakYesterday$ is that $\Yesterday$ evaluates to $\false$ in the first step and $\WeakYesterday$ to $\true$.
We use pastTSL formulas to describe safety properties.
Therefore, we define that 
computation $\computation$ and an input stream $\iota$ satisfy a pastTSL formula $\varphi$, written $\computation, \iota \TSLSatisfaction{\assign} \varphi$,
if $\forall i \in \nat \ldot \computation, \iota, i \TSLSatisfaction{\assign} \psi$.

The realizability problem of a \pastTSL formula $\psi$ asks whether there exists a strategy that reacts to predicate evaluations with cell updates according to $\psi$. Formally, a strategy is a function $\sigma : (2^{\predTerms})^+ \rightarrow \cellAssignments$.
For $\iota \in \inputStream$, we write $\sigma(\iota)$ for the computation obtained from $\sigma$:
$$\sigma(\iota)(i) = \sigma(\{\predTerm \in \predTerms \mid \eval(\sigma(\iota), \iota, 0, \predTerm)\} \ldots \{\predTerm \in \predTerms \mid \eval(\sigma(\iota), \iota, i, \predTerm)\})$$
Note that in order to define $\sigma(\iota)(i)$, the definition uses $\sigma(\iota)$. This is well-defined since the evaluation function $\eval(\computation, \iota, i, \tau)$ only uses $\computation~0 \ldots \computation\,(i-1)$.

\begin{definition}[\cite{FinkbeinerKPS19}]
	A \pastTSL formula $\psi$ is \emph{realizable} if, and only if, there exists a strategy $\sigma : (2^{\predTerms})^+ \rightarrow \cellAssignments$ such that for every input stream $\iota \in \inputStream$ and every assignment function $\assign: \funcSymbols \rightarrow \functions$ it holds that $\sigma(\iota), \iota \TSLSatisfaction{\assign} \psi$.
\end{definition}

\section{Parameterized TSL for Smart Contract Specifications}
\label{sec:paramTSL}  
In this section, we introduce parameterized pastTSL and show how the past-time fragment of the logic can be used for specifying smart contract state machines.

\subsection{Parameterized TSL}
\label{subsec:paramTSL}
Parameterized TSL extends TSL with universally quantified parameters.
Let $P$ be a set of parameters and $\cells_P$ a set of parameterized cells, where each cell is of the form $\cell{c}(p_1, \ldots, p_m)$ with $p_1, \ldots, p_m \in P$.
A parameterized TSL formula is a formula $\forall p_1, \ldots, \forall p_n \ldot \psi$, where $\psi$ is a TSL formula with cells from $\cells_P$ and which may use parameters as base terms in function and predicate terms.
We require that the formula is closed, i.e., every parameter occurring in $\psi$ is bound in the quantifier prefix.

Parameterized TSL formulas are evaluated with respect to a domain $\params$ for the parameters. 
We use a function $\instan: P \rightarrow \params$ to instantiate parameters.
Given a parameterized TSL formula $\forall p_1, \ldots, \forall p_n \ldot \psi$, $\psi[\instan]$ is the formula obtained by replacing all parameters according to $\instan$.
To simplify our constructions, we want $\psi[\instan]$ to be a TSL formula. Therefore, we assume that $\params$ is a subset of the set of constants and that $\cell{c}(\instan(p_1), \ldots, \instan(p_m)) \in \cells$, i.e., the instantiation of a parameterized cell refers to a normal, non-parameterized cell.
Given a computation $\computation$ and an input stream $\iota$, we define $\computation, \iota \models \forall p_1, \ldots, \forall p_n \ldot \psi$ iff $\forall \instan: P \to \params \ldot \computation, \iota \models \psi[\instan]$.

\subsection{Example: ERC20 Contract}
We illustrate how parameterized pastTSL can be used to specify the state machine logic of smart contract with an ERC20 token system.
An ERC20 token system provides a platform to transfer tokens between different accounts.
We follow the Open Zeppelin documentation~\cite{ERC20Spec}.
The special feature of the contract is the possibility to transfer not only tokens from one's own account but, after approval, also from a different account.
The core contract consists of methods \method{transfer}, \method{transferFrom}, and \method{approve}.
We do not model getters like \method{totalSupply} or \method{balanceOf} as they are not relevant for the temporal behavior of the contract.
The Open Zeppelin ERC20 contract describes various extensions to the core contract, one of which is the ability to pause transfers.
We distinguish between pausing transfers globally (\method{pause}) and from one's own account (\pmethod{pause}{m}).

Our specifications describe the temporal control flow of the contract's method calls and the data flow of its fields.
We distinguish between \emph{requirements}, \emph{obligations}, and \emph{assumptions}.
Requirements enforce the right order of method calls with correct arguments.
Obligations describe the data flow in the fields of the contract.
Assumptions restrict the space of possible predicate evaluations.
For this example, we do not need any assumptions. A typical assumption in other specifications would be that $\field{x} > \field{y}$ and $\field{y} > \field{x}$ cannot hold at the same time.

To emphasize that all past-time formulas are required to hold globally, we add a $\Globally$ operator to formulas.
We use two parameters \parameter{m} and \parameter{n}, where \parameter{m} always refers to the address from which tokens are subtracted and parameter \parameter{n}, whenever different from \parameter{m}, to the address that initiates the transfer.
We start with the requirements. First, any transfer from \parameter{m} must be backed by sufficient funds.
\begin{lstlisting}[numbers=none]
	(*$\Globally$*)((*\pmethod{transfer}{m}*) || (*\pmethod{transferFrom}{m,n}*) -> suffFunds(m, arg@amount))
\end{lstlisting}
Second, no method call can happen after \method{pause} until \method{unpause} is called:
\begin{lstlisting}[numbers=none]
	(*$\Globally$*)((*\pmethod{transferFrom}{m,n}*) || (*\pmethod{transfer}{m}*) || (*\pmethod{approve}{m,n}*) || (*\pmethod{localPause}{m}*) || (*\pmethod{localUnpause}{m}*) -> (!(*\method{pause}*) (*$\Since$*) (*\method{unpause}*)) || (*$\Historically$*) !(*\method{pause}*))
\end{lstlisting}
In contrast, \pmethod{localPause}{m} only stops method calls from \parameter{m}'s account:
\begin{lstlisting}[numbers=none]
	(*$\Globally$*)((*\pmethod{transferFrom}{m,n}*) || (*\pmethod{transfer}{m}*) || (*\pmethod{approve}{m,n}*)(*\break*) -> ((!(*\pmethod{localPause}{m}*)) (*$\Since$*) (*\pmethod{localUnpause}{m}*)) || (*$\Historically$*) !(*\pmethod{localPause}{m}*))
\end{lstlisting}
Finally, \method{pause} and \method{unpause} can only be called by the owner of the contract. Additionally, they cannot be called twice without the respective other in between and \method{unpause} cannot be called if \method{pause} has not been called at least once.
\begin{lstlisting}[numbers=none]
	(*$\Globally$*)((*$\method{unpause}$*) -> msg.sender$\,$=$\,$owner() &&$\,$ (*$\Yesterday$*)(!$\,$(*$\method{unpause}$*) (*$\Since$*) (*$\method{pause}$*)))
	(*$\Globally$*)((*$\method{pause}$*) -> msg.sender$\,$=$\,$owner() &&$\,$ (*$\Yesterday$*)(!$\,$(*$\method{pause}$*)$\,$(*$\Since$*)$\,$(*$\method{unpause}$*))$\,$||$\,$(*$\WeakYesterday \, \Historically$*)$\,$!$\,$(*$\method{pause}$*))
\end{lstlisting}
\texttt{mgs.sender} is an input, whereas \texttt{owner()} is a constant.
For the obligations, we need to make sure that the \field{approved} field is updated correctly. We use TSL's cell mechanism to model fields and use parameterized cells for mappings.
\begin{lstlisting}[numbers=none]
	(*$\Globally$*)((*\pmethod{approve}{m,n}*)$\,$->$\,\update{\pfield{approved}{m,n}}{\text{arg@amount}}$)
	(*$\Globally$*)((*\pmethod{transferFrom}{m,n}*)$\,$->$\,\update{\pfield{approved}{m,n}}{\pfield{approved}{m,n}\text{-arg@amount}}$)
	(*$\Globally$*)(!((*\pmethod{transferFrom}{m}*)$\,$||$\,$(*\pmethod{approve}{m,n}*))$\,$->$\,\update{\pfield{approved}{m,n}}{\pfield{approved}{m,n}}$)
\end{lstlisting}
Transitions that do not change the content of a cell are indicated by self-updates like $\update{\pfield{approved}{m,n}}{\pfield{approved}{m,n}}$.

\section{Synthesis Approach}
\label{sec:approach}
The synthesis goal of this paper is to construct a state machine that satisfies parameterized pastTSL specifications like the one given in the last section.

\subsection{Problem Statement}
\label{subsec:problem}
Our specifications are split into assumptions $\phi_A$, requirements $\phi_R$, and obligations $\phi_O$, all of which are parameterized pastTSL formulas.
Each of them can be given as invariant $\phi^\mathit{inv}$ or as initial formula $\phi^\mathit{init}$.
For synthesis, we compose them to the following formula, which, according to the definition of (parameterized) pastTSL, is required to hold globally.
\begin{align*}
	\phi \coloneqq \, &\forall \parameter{p}_1, \ldots, \parameter{p}_m \ldot \\
	& \quad (\WeakYesterday \false \rightarrow \phi^\mathit{init}_A \land \phi^\mathit{init}_R) \land (\Historically (\phi^\mathit{inv}_A \land \phi^\mathit{inv}_R)) \rightarrow (\WeakYesterday \false \rightarrow \phi^\mathit{init}_O) \land \phi^\mathit{inv}_O
\end{align*}
\begingroup
\setlength{\intextsep}{4pt}
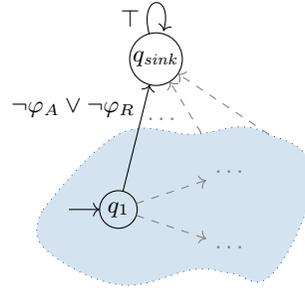
\begin{wrapfigure}{r}{0pt}
	\footnotesize
	\centering
	\begin{tikzpicture}[initial text=, 
		->,
		node distance=1.5cm,
		state/.style = {circle, draw, minimum size=5mm,
			inner sep=0pt, outer sep=0pt},
		state/.default = 6pt  
		] 
		\draw [-,dotted,pastelblue,fill=pastelblue!20]  plot[smooth, tension=.7] coordinates {(-1,0) (0, 1) (1,1) (1.6, 1.1) (2, 1) (2.5, 0.7) (2.5, 0) (2, -0.9) (1.5,-0.85) (1,-0.8) (0,-1) (-1, -0.5) (-1,0)};	
		
		\node[state, initial left] at (0,0) (q1) {$q_1$};		
		\node[color=gray] at (1.5,0.5) (q3) {$\dots$};			
		\node[color=gray] at (1.5,-0.5) (q4) {$\dots$};					
		\node[color=gray] at (0.6,1.2) {$\dots$};		
		\node[state, minimum size=7mm,] at (0.5,2) (q2) {$q_{\mathit{sink}}$};
		
		\draw[color=gray, dashed]  (q1) edge (q3) (q1) edge (q4);
		\draw (q2) edge[loop above] node[left, xshift=-1mm, yshift=-2mm]{$\top$} (q2);	
		\draw (q1) edge node[left,xshift=1mm,yshift=4mm]{$\neg \phi_A \lor \neg \phi_R$} (q2);
		\draw[color=gray, dashed] (2, 1) to (q2);
		\draw[color=gray, dashed] (1.1, 1.02) to (q2);
		
	\end{tikzpicture}
	\caption{Sketch of the system synthesized from $\phi$. The dotted blue area implements the contract.}
	\label{fig:synthesis_result}
\end{wrapfigure}%
Here, $\parameter{p}_1, \ldots, \parameter{p}_m$ are the parameters occurring in the inner formulas $\phi^\mathit{init}_A$, $\phi^\mathit{init}_R$, $\phi^\mathit{init}_O$, $\phi^\mathit{inv}_A$, $\phi^\mathit{inv}_R$, and $\phi^\mathit{inv}_O$.
We use $\WeakYesterday \false$ to refer to the first position of a trace.

It might seem counter-intuitive that we include requirements on the left side of the implication.
The reason is that requirements describe a monitor on the method calls, which, from a synthesis perspective, constitute system inputs. Thus, if we conjuncted requirements with obligations, the specification would be unrealizable.
Instead, we leverage the fact that all specifications describe safety properties.
Thus,
state machines satisfying $\phi$ have a shape as depicted in \Cref{fig:synthesis_result}.
Whenever an assumption or a requirement is violated, the machine enters an accepting sink state.
To obtain the desired result, we reject any method call for which the system moves to the sink state. Like this, the remaining system enforces the requirements on method calls and also satisfies the obligations.
For the rest of the paper, we depict state machines synthesized from $\phi$ without the sink state.

\paragraph{On Safety Properties.}
We restrict ourselves to safety properties for three reasons.
First of all, as we consider the synthesis problem, our requirements can only describe a monitor on the method calls. Liveness properties are known not to be monitorable.
For future work, one could consider model-checking the synthesized state machine with regard to liveness properties like ``eventually, method X is \emph{callable}".
Second, the restriction to safety automata enables the splitting algorithm described in \Cref{sec:paramSynt}, which is essential for our approach in order to efficiently implement the state machine in Solidity.
Lastly, synthesis from safety properties is less complex than full LTL synthesis (c.f. \Cref{sec:impl_eval}), which enables us to synthesize non-trivial state machines within seconds.

\subsection{High-Level Description of the Approach}
\label{subsec:high-level-algorithm}
\paragraph{Challenges.}
We need to address two major challenges. First, as parameters range over an infinite domain $\params$, parameterized pastTSL formulas describe (in general) \emph{infinite-state systems}.
Second, even if we managed to synthesize some representation of the infinite-state system, we still need to translate it to  Solidity code.
In Solidity, every computation costs gas. Therefore, we need to find a compact representation of the system that minimizes the number of computation steps needed to update the system after a method call.

\paragraph{Approach in a Nutshell.}
\begin{figure}[t]
	\small
	\centering
	\begin{tikzpicture}[initial text=, 
		->,
		node distance=3.6cm,
		state/.style = {circle, draw, minimum size=6mm,
			inner sep=0pt, outer sep=0pt, font=\normalsize}
		] 
		
		\node[state, initial below] (s1) {$s_1$};
		\node[state, right of = s1] (s2) {$s_2$};
		\node[state, left of = s1] (s3) {$s_3$};
		\node[state, right of = s2] (s4) {$s_4$};
		
		\draw  (s1) edge[bend left=15] node[above]{\pmethod{localPause}{m}} (s2)
		(s1) edge[bend right=15,align=center] node[above]{\method{pause} $\land$\\ \texttt{sender = owner()}} (s3)
		(s2) edge[bend left=15,align=center] node[below]{\pmethod{localUnpause}{m}} (s1)
		(s3) edge[bend right=15,align=center] node[below]{\method{unpause} $\land$\\ \texttt{sender = owner()}} (s1)
		(s2) edge[bend left=15,align=center] node[above]{\method{pause} $\land$\\ \texttt{sender = owner()}} (s4)
		(s4) edge[bend left=15,align=center] node[below]{\method{unpause} $\land$\\  \texttt{sender = owner()}} (s2)
		(s1) edge[loop above, looseness =20,align=left] node[xshift=13mm]{\pmethod{approve}{m,n} $\lor$ (\pmethod{transfer}{m} $\land$ \texttt{suffFunds(m,a)}) $\lor$ \\ (\pmethod{transferFrom}{m,n} $\land$ \texttt{suffFunds(m,a)} $\land$ \pfield{approved}{m,n} $\geq$ \texttt{a})} (s1);
	\end{tikzpicture}
	\caption{System $\winningRegion$ for the ERC20 contract. Irrelevant predicates and all cell updates are omitted for readability. We also write \texttt{a} instead of \texttt{arg@amount}.}
	\label{fig:w_erc20}
\end{figure}
We address these challenges in two steps.
First, we interpret the specification as being unquantified, i.e., we remove all quantifiers and tread the parameters as normal constants (e.g., in case of \texttt{suffFunds(m, arg@amount)}) or as part of the cells' name (e.g., in case of \pfield{approved}{m,n}).
Like that, we obtain a plain pastTSL formula that describes the finite-state system representing the correct control flow for every parameter instantiation.
We synthesize the winning region from that formula, which we call $\winningRegion$.
For the running ERC20 example, $\winningRegion$ can be found in \Cref{fig:w_erc20}.

Of course, the contract can be in different states of $\winningRegion$ depending on the parameter instantiation.
In theory, we would therefore like to keep the necessary number of copies of $\winningRegion$.
For example, if \pmethod{approve}{m=1,n=2} is called, we would execute the corresponding transition in system $\winningRegion_{(\parameter{m} = 1, \parameter{n} = 2)}$.
The problem with this naive approach is that calling a method parameterized with only a subset of the parameters would lead to updates of several systems.
For example, 
if \pmethod{localPause}{m=1} is called, this would have to be recorded in all $\winningRegion_{(\parameter{m} = 1, \parameter{n} = v)}$ for any value $v$ of \parameter{n} observed so far.
Updating all these state machines after each method call would lead to a quick explosion of the gas consumption in Solidity.
Instead, addressing the second challenge, we split $\winningRegion$ into a hierarchical structure of state machines, one for each subset of parameters. As a result, we only have to update a single state machine per method call and still maintain the correct state of each instance (we describe this approach in more detail in \Cref{subsec:idea}).
To summarize, we proceed as follows.
\begin{enumerate}
	\item Interpret the parameterized pastTSL formula $\phi$ as a pastTSL formula $\psi$ and synthesize the winning region $\winningRegion$ from it.
	\item Split $\winningRegion$ into a hierarchical structure $\winningRegion_1, \ldots, \winningRegion_n$ and show how these systems can be interpreted as an infinite-state machine $\stateMachine$ satisfying $\phi$.
	\item Generate Solidity code that implements transitions according to $\winningRegion_1, \ldots, \winningRegion_n$.
\end{enumerate}

In the following sections, we discuss each of these steps in detail.

\section{PastTSL Synthesis}
\label{sec:pastTSL_synthesis}
Let $\phi$ be a parameterized pastTSL formula as described in \Cref{subsec:problem}.
We first translate $\phi$ to pastTSL. This is easy: just remove all quantifiers and interpret parameters as constants (i.e., $P \subseteq \constSymbols$) and parameterized cells as normal cells (i.e., $\cells_P \subseteq \cells$).

Unfortunately, even though past-time fragments usually simplify logical problems, we establish that the realizability problem of pastTSL is undecidable.
We obtain this result by a reduction from the universal halting problem of lossy counter machines~\cite{DBLP:journals/tcs/Mayr03}.

An \emph{n-counter machine ($n$CM)} consists of a finite set of instructions $l_1, \ldots, l_m$, which modify $n$ counters $c_1, \ldots, c_n$. Each instruction $l_i$ is of one of the following forms, where $1 \leq x \leq n$ and  $1 \leq j,k \leq m$.
\begin{itemize}[leftmargin=20pt]
	\item \lstinline|$l_i$: $c_x \coloneqq c_x+1$; goto $l_j$|
	\item \lstinline|$l_i$: if $c_x = 0$ then goto $l_j$ else $c_x \coloneqq c_x-1$; goto $l_k$|
	\item \lstinline|$l_i$: halt|
\end{itemize}
A configuration of a $n$CM is a tuple $(l_i, v_1, \ldots, v_n)$, where $l_i$ is the next instruction to be executed, and $v_1, \ldots, v_n$ denote the values of the counters.
Compared to non-lossy $n$CMs, the counters of a lossy $n$CM may spontaneously decrease.
We employ a version of lossiness where a counter can become zero if it is tested for zero (see \cite{DBLP:journals/tcs/Mayr03} for details).
A lossy $n$CM halts from an initial configuration if it eventually reaches a state with the halting instruction.

\begin{theorem}
	The pastTSL realizability problem is undecidable.
\end{theorem}

\begin{proof}
	We reduce from the universal halting problem of lossy $n$CMs, which is undecidable~\cite{DBLP:journals/tcs/Mayr03}.
	We spell out the main ideas. 
	Our formulas consist of one constant $z()$, one function $f$, and one predicate $p$. There are no inputs.
	Applying an idea from~\cite{DBLP:conf/time/LisitsaP05}, we use two cells for every counter $c_x$: $c_x^\mathit{inc}$ to count increments and~$c_x^\mathit{dec}$ to count decrements. Applying~$f$ to~$c_x^\mathit{inc}$ increments the counter, applying~$f$ to~$c_x^\mathit{dec}$ decrements it. If the number of increments and decrements is equal, the counter is zero. In TSL, we use the formula $\psi^0_x \coloneqq p(c_x^\mathit{inc}) \leftrightarrow p(c_x^\mathit{dec})$ to test if a counter is zero. Note that if the counter really is zero, then the test for zero \emph{must} evaluate to $\true$ by the TSL semantics. For all other cases, it \emph{may} evaluate to $\true$. If the equivalence evaluates to $\true$ even though the counter is non-zero, we interpret it as a spontaneous reset.
	Initially, the value of the counters need to be arbitrary.
	We reflect this by making no assumptions on the first step, thereby allowing the strategy to set the counter cells to any valid function term $f^*(z())$.	
	We use $n$ cells $l_1, \ldots , l_n$ for encoding the instructions. Globally, all instruction cells but the one indicating the next instruction, indicated by $\update{l_i}{f(l_i)}$, need to self-update.
	We spell out the encoding of an instruction of the second type.
	\begin{align*}
		\Globally (\Yesterday \update{l_i}{f(l_i)} \rightarrow &(\psi^0_x \rightarrow \update{l_j}{f(l_j)} \land \update{c_x^\mathit{inc}}{z()} \land \update{c_x^\mathit{dec}}{z()})\\
		& \land (\neg \psi^0_x \rightarrow \update{l_k}{f(l_k)} \land \update{c_x^\mathit{inc}}{c_x^\mathit{inc}} \land \update{c_x^\mathit{dec}}{f(c_x^\mathit{dec})})\!)
	\end{align*}
	The formula tests if the instruction to be executed is $l_i$. If so, we test the counter $c_x$ for zero and set the corresponding cell to $z()$ if that is the case. Furthermore, the correct next instruction is updated by applying $f$. Finally, we encode that we never reach a halting state: $\Globally \neg \update{l_\mathit{halt}}{f(l_\mathit{halt})}$.
	The resulting \pastTSL formula is realizable if, and only if, there is an initial state such that the machine never halts. Thus, undecidability of the \pastTSL realizability problem follows.
\end{proof}

\subsection{PastTSL Synthesis via PastLTL Approximation}
\label{par:tsl_approx}
As pastTSL realizability is undecidable, we have to approximate the synthesis problem. To do so, we employ a reduction proposed in~\cite{FinkbeinerKPS19}, which approximates TSL synthesis in LTL, for which realizability is decidable.
The reduction replaces all predicate terms and update terms of a TSL formula $\psi$ with unique atomic propositions, e.g., $a_{p\_x}$ for $p(\cell{x})$ and $a_{x\_\mathit{to}\_f\_x}$ for $\update{\cell{x}}{f(\cell{x})}$.
Additionally, the reduction adds a formula that ensures that every cell is updated with exactly one function term in each step.
Given a pastTSL formula $\psi$, the reduction produces an LTL approximation $\psi_\text{LTL}$ that also falls into the past-time fragment.
The reduction is sound but not complete~\cite{FinkbeinerKPS19}, i.e., $\psi$ might be realizable even if $\psi_\text{LTL}$ is not.
For the smart contract specifications we produced for our evaluation, however, we never encountered spurious unrealizability.

Let $\ap$ be the set of atomic propositions of $\psi_\text{LTL}$. 
From every trace $t$ over $\ap$, we can directly generate a computation $\mathit{comp}(t) \in \compStream$ as follows: 
$$
	\mathit{comp}(t)(i)(\cell{c}) = \funcTerm \quad \text{if } a_{\cell{c}\_\mathit{to}\_{\tau_f}} \in t(i)
$$
For the other direction, given a computation $\computation$, an input stream $\iota$, and an assignment function $\assign$, we write $\mathit{LTL}(\iota, \computation, \assign)$ for the corresponding trace over $\ap$.
$$
	\mathit{LTL}(\iota, \computation, \assign) (i) = \set{\{a_{\tau_p} \mid \eval(\computation, \iota, i, \predTerm)\}} \cup \set{a_{\cell{c}\_\mathit{to}\_{\tau_f}} \mid \computation(i)(\cell{c}) = \funcTerm}
$$
The following proposition follows from the soundness of the approximation.
\begin{proposition}
\label{prop:soundness}
	For every assignment function $\assign$, input stream $\iota$, and computation $\computation$, $\mathit{LTL}(\iota, \computation, \assign) \models \psi_\text{LTL}$ iff $\computation, \iota \models_\assign \psi$.
\end{proposition}

\paragraph{Parameterized Atomic Propositions.}
In our case, the pastTSL formula $\psi$ is obtained from a parameterized pastTSL formula $\phi$.
Thus, the atomic propositions of $\psi_\text{LTL}$ contain parameters, e.g., $a_{\mathit{transferFrom}\_m\_n}$.
To enable correctness reasoning in the next section, we lift the instantiation of parameters to the level of atomic propositions and LTL formulas.

For $a \in \ap$, we write $a(p_1, \ldots, p_m)$ if $a$ contains parameters $p_1, \ldots, p_m$.
We usually denote the sequence $p_{1}, \ldots, p_{m}$ with some $P_i$, for which we also use set notation.
We assume that every proposition occurs with only one sequence of parameters, i.e., there are no $a(P_i), a(P_j) \in \ap$ with $P_i \neq P_j$.

Given $\instan: P \rightarrow \params$, $P_i[\instan]$ denotes $(\instan(p_{1}), \ldots, \instan(p_{m}))$ and $a[\instan]$ denotes a($P_i[\instan]$).
For example, for $a_{\mathit{transferFrom}\_m\_n}[m \mapsto 1, n \mapsto 2]$, we obtain $a_{\mathit{transferFrom}\_1\_2}$.
We also write $\psi_\text{LTL}[\instan]$ for an LTL formula where every atomic proposition is instantiated according to $\instan$.
We define $\ap_\params = \set{a[\instan] \mid a \in \ap, \instan : P \rightarrow \params}$.
As there are no two $a(P_i), a(P_j) \in \ap$ with $P_i \neq P_j$, for any $\alpha \in \ap_\params$, there is exactly one $a$ such that $a[\instan] = \alpha$ for some $\instan$.

\section{Splitting Algorithm}
\label{sec:paramSynt}  
In the last section, we discussed that we need to approximate the parameterized pastTSL formula $\phi$ to an LTL formula $\psi_\text{LTL}$ to synthesize $\winningRegion$.
Note that $\winningRegion$ alone does not implement a strategy for $\phi$ as each parameter instance might be in a different state of $\winningRegion$ (c.f. \Cref{subsec:high-level-algorithm}).
In this section, we discuss how to split up $\winningRegion$ to enable an efficient implementation in Solidity while at the same time making sure that the generated traces realize the original formula $\phi$.

\subsection{Idea of the Algorithm}
\label{subsec:idea}
The idea of the algorithm is to split $\winningRegion$ into multiple subsystems $\winningRegion_1, \ldots, \winningRegion_n$ such that each $\winningRegion_i$ contains the transitions for method calls with parameters $P_i$.
For the ERC20 example, we produce the three systems $\winningRegion_\emptyset$, $\winningRegion_{\set{\parameter{m}}}$, and $\winningRegion_{\set{\parameter{m,n}}}$ depicted in \Cref{fig:erc20split}.
For each of these systems, at runtime, we create a copy for every instantiation of their parameters.

If a method with parameters $P_i$ is called and $\winningRegion_i$ is in state $q$, then the transition from $q$ labeled with that method call is the candidate transition to be executed.
This means that compared to the naive solution (c.f. \Cref{subsec:high-level-algorithm}) a call to \pmethod{localPause}{m=1} only has to be recorded in a single transition system (namely $\winningRegion_{\set{\parameter{m=1}}}$).

Crucially, however, we now need to ensure that we only produce traces of $\winningRegion$.
For example, if \pmethod{localPause}{m=1} is called, we move from state $q_1$ to $q_2$ in system $\winningRegion_{\set{\parameter{m=1}}}$. This corresponds to a transition from $s_1$ to $s_2$ in $\winningRegion$.
Now, for instances with \texttt{m=1}, calls to all methods except \pmethod{localUnpause}{m=1} and \method{pause} need to be rejected (according to $\winningRegion$), even though these would technically be possible in systems $\winningRegion_{\set{\parameter{m=1}, \parameter{n=}v}}$ (for any $v$).
To do so, we synchronize the systems with the help of transition guards $\texttt{in\_}s_i$ and additional state labels $\K$.

A guard $\texttt{in\_}s_i$ indicates that the transition can only be taken in state $s_i$ of $\winningRegion$.
To check if this requirement is satisfied, the systems share their knowledge about the state $\winningRegion$ would currently be in.
A knowledge label $\color{darkgreen!80}K = \set{s_1, s_2}$ in state $q$ of $\winningRegion_i$ means that $\winningRegion$ could be in state $s_1$ or state $s_2$ if $\winningRegion_i$ is in state $q$.
Each system is a projection to some transitions of $\winningRegion$ and therefore has different knowledge labels.

The systems share their knowledge in order to determine which state $\winningRegion$ would be in for a trace of one parameter instantiation.
For each method call, the systems must come to a conclusion if that call would be allowed in the current state of $\winningRegion$.
However, $\winningRegion_i$ may only use the knowledge of systems $\winningRegion_j$ with $P_j \subseteq P_i$ as these are the parameters for which there is currently a value available.
To guarantee that an unambiguous conclusion is always possible to achieve, we formulate two simple requirements and an independence check.

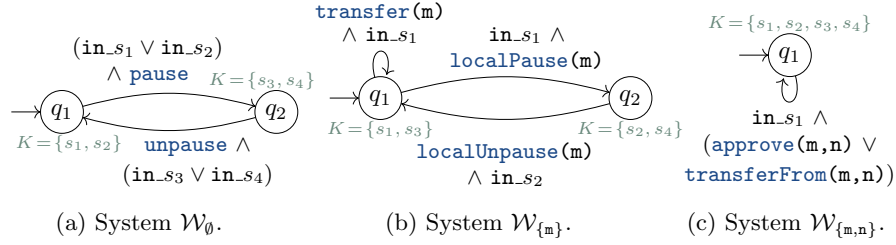
\begin{figure}[t]
	\begin{subfigure}[b]{0.33\textwidth}
		\small
		\centering
		\scalebox{0.95}{
			\begin{tikzpicture}[initial text=, 
				->,
				node distance=3cm,
				state/.style = {circle, draw, minimum size=6mm,
					inner sep=0pt, outer sep=0pt, font=\normalsize},
				state/.default = 6pt  
				]
				
				
				\node[state, initial left] (q1) {$q_1$};
				\node[state, right of=q1] (q2) {$q_2$};
				
				\node[draw=none, below = -1mm of q1, xshift=1mm] (k1) {\color{darkgreen!80}{\scriptsize$K \!=\! \{ s_1 , s_2 \}$}};
				\node[draw=none, above = -1mm of q2, xshift=-2mm] (k2) {\color{darkgreen!80}{\scriptsize$K \!=\! \{ s_3 , s_4\}$}};
				
				\draw  (q1) edge[bend left=15, above] node[xshift=-3mm]{\shortstack{$(\texttt{in\_}s_1 \lor \texttt{in\_}s_2)$ \\ $\land$ \method{pause}}} (q2)
				(q2) edge[bend left=15, below] node[xshift=4mm]{\shortstack{\method{unpause} $\land$ \\ $(\texttt{in\_}s_3 \lor \texttt{in\_}s_4)$}} (q1);	
		\end{tikzpicture}}
		\caption{System $\winningRegion_\emptyset$.}
		\label{fig:sG}
	\end{subfigure}
	\hfill
	\begin{subfigure}[b]{0.39\textwidth}
		\small
		\centering
		\scalebox{0.95}{
			\begin{tikzpicture}[initial text=, 
				->,
				node distance=3.5cm,
				state/.style = {circle, draw, minimum size=6mm,
					inner sep=0pt, outer sep=0pt, font=\normalsize}
				]
				
				
				\node[state, initial left] (q1) {$q_1$};
				\node[state, right of=q1] (q2) {$q_2$};
				
				\node[draw=none, below = -1mm of q1] (k1) {\color{darkgreen!80}{\scriptsize$K \!=\! \{ s_1 , s_3 \}$}};
				\node[draw=none, below = -1mm of q2] (k2) {\color{darkgreen!80}{\scriptsize$K \!=\! \{ s_2 , s_4\}$}};
				
				\draw  (q1) edge[bend left=15, above] node[xshift=3mm]{\shortstack{$\texttt{in\_}s_1$ $\land$ \\ \pmethod{localPause}{m}}} (q2)
				(q2) edge[bend left=15, below] node[yshift=-2mm]{\shortstack{\pmethod{localUnpause}{m} \\ $\land$ $\texttt{in\_}s_2$}} (q1)
				(q1) edge[loop above] node {\shortstack{\pmethod{transfer}{m} \\ $\land$ $\texttt{in\_}s_1$}} (q2);	
		\end{tikzpicture}}
		\caption{System $\winningRegion_{\set{\parameter{m}}}$.}
		\label{fig:sM}
	\end{subfigure}
	\hfill
	\begin{subfigure}[b]{0.26\textwidth}
		\small
		\centering
		\scalebox{0.95}{
			\begin{tikzpicture}[initial text=, 
				->,
				node distance=3cm,
				state/.style = {circle, draw, minimum size=6mm,
					inner sep=0pt, outer sep=0pt, font=\normalsize},
				state/.default = 6pt  
				]

				\node[state, initial left] (q1) {$q_1$};
				
				\node[draw=none, above = -1mm of q1] (k1) {\color{darkgreen!80}{\scriptsize$K \!=\! \{ s_1, s_2, s_3, s_4 \}$}};
				
				\draw  (q1) edge[loop below] node{\shortstack{$\texttt{in\_}s_1$ $\land$ \\ (\pmethod{approve}{m,n} $\lor$\\ \pmethod{transferFrom}{m,n})}} (q1);
		\end{tikzpicture}}
		\caption{System $\winningRegion_{\set{\parameter{m},\parameter{n}}}$.}
		\label{fig:sMN}
	\end{subfigure}
	\caption{State machines for all non-empty parameter sets. For readability, we omit cell updates and all predicates apart from method calls.}\label{fig:erc20split}
\end{figure}

\subsection{Construction}
Let $\psi_\text{LTL}$ be given.
The formula is the approximation of a \pastTSL formula and therefore ranges over $\ap = I \cup O$, where $I$ are the atomic propositions obtained from predicate terms and $O$ are the ones obtained from update terms.
For $A \subseteq \ap$, we write $A_{|O}$ instead of $A \cap O$.
We denote the set of atomic propositions that correspond to some method call \pmethod{f}{$P_i$} by $\methodProps \subseteq I$ and the set of output propositions that denote self-updates by $\selfU \subseteq O$.

Let $\winningRegion = (S_\winningRegion, s^0_\winningRegion, \delta_\winningRegion)$ be the finite-state machine over~$\ap$ that constitutes the winning region of $\psi_\text{LTL}$. $\delta_\winningRegion$ its transition relation.
We state two requirements on $\winningRegion$, which are needed to enable a sound splitting of $\winningRegion$ and can be checked easily by inspecting all its transitions.
First, we require that calls to a method parameterized with parameter sequence $P_i$ only result in cell updates parameterized with the same parameter sequence.
\begin{requirement}[Local Updates]
	\label{req:updates}
	For every transition $(s, A, s') \in \delta_\winningRegion$, if $\pfield{o}{$P_i$} \in A_{|O}$ and $\pfield{o}{$P_i$} \notin \selfU$, then there is a method call proposition $\pmethod{f}{$P_i$} \in A$.
\end{requirement}

Second, whether a method can be called at a given state must not depend on predicates with parameters that are not included in the current method call.

\begin{requirement}[Independence of Irrelevant Predicates]
	\label{req:inputs}
	For every $(s, A, s')\in \delta_\winningRegion$, if $\pmethod{f}{$P_i$} \in A$, then for every $a(P_j) \in I$ with $P_j \not\subseteq P_i$ and $a(P_j) \notin \methodProps$, there is a transition $(s, A', s')$ with $a(P_j) \in A$ iff $a(P_j) \notin A'$ and $A_{|O} = A'_{|O}$.
\end{requirement}
The above requirement is needed to unify software and state machine reasoning.
In state machines, the value of all propositions needs to be known to determine the right transition. In software, however, if \pmethod{localPause}{m} is called, the value of \texttt{n} is undefined and we cannot evaluate predicates depending on \parameter{n}.

If $\winningRegion$ satisfies the above requirements, we construct $\wini{1}, \ldots , \wini{n}$ for each parameter subset $P_i$. Each $\wini{i}$ projects $\winningRegion$ to the method calls with parameters $P_i$. 
The algorithm to construct the projections combines several standard automata-theoretic concepts:
\begin{enumerate}
	\item Introduce a new guard proposition $\texttt{in\_}s$ for every state $s \in S_\winningRegion$ of $\winningRegion$. For every transition $(s, A, s') \in \delta_\winningRegion$, replace $A$ with $A \cup \set{\texttt{in\_}s}$.
	\item Label all transitions $(s, A, s') \in \delta_\winningRegion$ for which there is no $\pmethod{f}{$P_i$} \in A$ with $\epsilon$. The result is a nondeterministic safety automaton with $\epsilon$-edges.
	\item $\wini{i}$ is obtained by determinizing the safety automaton using the standard subset construction. This removes all $\epsilon$ transitions. During the construction, we label each state with the subset of $S_\winningRegion$ it represents, these are the knowledge labels $\K$.
\end{enumerate}
We use $S_i$ for the states of $\wini{i}$, $\delta_i$ for its transition relation, and $K_i : S_i \rightarrow \pow{S_\winningRegion}$ for the knowledge labels.
Note that every transition in $\winningRegion$ is labeled with exactly one method call proposition and is therefore present in exactly one $\wini{i}$.
The following two propositions follow from the correctness of the subset construction for the determinization of finite automata.
The first proposition states that the outgoing transitions of a state $s_i \in \wini{i}$ are exactly the outgoing transitions of all states $s \in K_i(s_i)$.

\begin{proposition}
	\label{prop:knowledge2}
	For every state $s_i \in S_i$, if $s \in K_i(s_i)$, then for all $s' \in S$ and $A \subseteq \ap$, $(s, A, s') \in \delta_\winningRegion$ iff $(s_i, A \cup \set{\texttt{in\_}s}, s'_i) \in \delta_i$ for some $s_i' \in S_i$.
\end{proposition}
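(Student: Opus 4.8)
The plan is to read \Cref{prop:knowledge2} as an \emph{enabledness} statement and to derive it directly from the correctness of the subset construction, once the bookkeeping role of the guard propositions $\texttt{in\_}s$ has been isolated. The observation I would establish first is that Step~1 attaches the guard $\texttt{in\_}s$ \emph{only} to transitions leaving the state $s$ of $\winningRegion$. Hence, in the nondeterministic $\epsilon$-automaton produced by Steps~1--2, a label of the form $A \cup \set{\texttt{in\_}s}$ can be carried by a transition out of $s$ and out of no other state; moreover this label is non-$\epsilon$, so by Step~2 the corresponding transition is retained exactly when $A$ contains a method call $\pmethod{f}{$P_i$} \in \methodProps$. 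Since every transition of $\winningRegion$ carries exactly one method call, these are precisely the transitions that $\wini{i}$ is built to track, and I would accordingly phrase the equivalence for such labels $A$.

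With this in hand I would invoke the standard property of the subset construction with $\epsilon$-edges: a determinized state---here the $\epsilon$-closed knowledge set $K_i(s_i)$---has an outgoing edge on a non-$\epsilon$ letter $\ell$ if, and only if, some member of that set carries an outgoing $\ell$-edge in the underlying nondeterministic automaton. Instantiating $\ell \coloneqq A \cup \set{\texttt{in\_}s}$ and using the hypothesis $s \in K_i(s_i)$, the guard observation shows that the \emph{only} member of $K_i(s_i)$ that could carry such an edge is $s$ itself. Consequently $(s_i, A \cup \set{\texttt{in\_}s}, s_i') \in \delta_i$ for some $s_i'$ holds if, and only if, $s$ carries an $A \cup \set{\texttt{in\_}s}$-edge in the nondeterministic automaton, \ie, if, and only if, $(s, A, s') \in \delta_\winningRegion$ for some $s'$. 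Both directions fall out at once: forwards, an enabled transition of $s$ survives Step~2 (it is not an $\epsilon$-edge, as $A$ contains the $P_i$-method call) and is therefore reflected at $s_i$; backwards, the guard forces the witnessing nondeterministic state to be exactly $s$.

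The step I expect to require the most care is the interplay between the guards and the $\epsilon$-relabeling. Concretely, one must argue that no transition of $\winningRegion$ other than those leaving $s$ can masquerade under the label $A \cup \set{\texttt{in\_}s}$---this is exactly where the per-state uniqueness of the guard is indispensable---and one must make explicit that the correspondence concerns \emph{visible} edges, \ie, labels $A$ containing a $P_i$-parameterized method call. Transitions of $\winningRegion$ without such a call are turned into $\epsilon$-edges and absorbed into the knowledge set rather than appearing as edges of $\wini{i}$, so the statement is naturally read with $s'$ existentially quantified, matching the ``for some $s_i'$'' on the right-hand side. Beyond these points the argument is a routine instantiation of the subset-construction correspondence and uses nothing from automata theory that is not entirely standard.
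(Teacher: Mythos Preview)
Your proposal is correct and follows exactly the route the paper indicates: the paper does not give a detailed argument for \Cref{prop:knowledge2} at all, but simply states that it ``follow[s] from the correctness of the subset construction for the determinization of finite automata.'' Your plan is a faithful unpacking of that one-line justification, and the two points you single out for care---the per-state uniqueness of the guard $\texttt{in\_}s$ and the restriction to labels $A$ that carry a $P_i$-parameterized method call (so that the transition is not relabeled to $\epsilon$ in Step~2)---are precisely the places where the paper's terse statement is imprecise; your treatment makes both explicit and resolves them correctly.
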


The second one states that the knowledge labels in $\wini{i}$ are consistent with the transitions of~$\winningRegion$.

\begin{proposition}
	\label{prop:knowledge}
	Let $(s, A, s') \in \delta_\winningRegion$ with $\pmethod{f}{$P_i$} \in A$. Then, for every state $s_i \in S_i$ with $s \in K_i(s_i)$, and every transition $(s_i, A \cup \set{\texttt{in\_}s}, s'_i) \in \delta_i$, it holds that $s' \in K_i(s'_i)$. Furthermore, for every $s_j$ of $\wini{j}$ with $i \neq j$, if $s \in K_j(s_j)$, then $s' \in K_j(s_j)$.  
\end{proposition}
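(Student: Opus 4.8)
The plan is to handle the two parts of the proposition separately, deriving both from the correctness of the subset construction performed in step~3 of the splitting algorithm, but invoking a different feature of that construction in each case. Throughout, I would first track how the given transition $(s,A,s') \in \delta_\winningRegion$ is rewritten by the algorithm: step~1 changes its label to $A \cup \set{\texttt{in\_}s}$, and step~2 may or may not relabel it as an $\epsilon$-edge depending on which subsystem we are building.

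For the first part, I would observe that since $\pmethod{f}{$P_i$} \in A$, step~2 does \emph{not} relabel this edge as $\epsilon$ when constructing $\wini{i}$, so it survives as a genuine (non-$\epsilon$) edge $(s, A \cup \set{\texttt{in\_}s}, s')$ of the intermediate nondeterministic automaton. Now I would appeal to the defining property of the subset construction: for a determinized state $s_i$, reading a letter $B$ yields the unique successor $s_i'$ whose knowledge set $K_i(s_i')$ is the $\epsilon$-closure of the set of all states reachable from $K_i(s_i)$ along a $B$-labeled edge. Taking $B = A \cup \set{\texttt{in\_}s}$ and using $s \in K_i(s_i)$ together with the surviving edge, $s'$ lies in that reachable set, hence $s' \in K_i(s_i')$. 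Determinism of $\wini{i}$ guarantees $s_i'$ is the unique target, which is why ``every transition'' in the statement reduces to this single one.

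For the second part, the key point is that the same edge becomes an $\epsilon$-edge in every \emph{other} subsystem. Here I would use the remark stated just before the propositions, namely that each transition of $\winningRegion$ carries exactly one method call proposition; since that proposition is $\pmethod{f}{$P_i$}$, the set $A$ contains no method call with a different parameter sequence $P_j$. Consequently, in the construction of $\wini{j}$ for $j \neq i$, step~2 relabels $(s, A \cup \set{\texttt{in\_}s}, s')$ as an $\epsilon$-edge. Because the subsets produced by determinization are $\epsilon$-closed by construction, membership $s \in K_j(s_j)$ propagates along this $\epsilon$-edge to give $s' \in K_j(s_j)$; crucially the subset state $s_j$ itself does not change, matching the statement.

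I expect the only genuinely delicate step to be the identification of the other-subsystem edge as an $\epsilon$-transition, which rests entirely on the single-method-call invariant of $\winningRegion$'s transitions: without it, a $P_j$-method could coexist in $A$, the edge would instead be a real transition of $\wini{j}$, and the target subset state could change, breaking the claim that $s_j$ stays fixed. Once the $\epsilon$-labeling is established, both parts collapse to the standard facts that determinized successor sets collect all one-step successors and that subset states are $\epsilon$-closed, which is precisely why the paper can assert that these propositions ``follow from the correctness of the subset construction.''
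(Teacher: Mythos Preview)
Your proposal is correct and matches the paper's approach: the paper does not spell out a proof but simply states that both propositions ``follow from the correctness of the subset construction for the determinization of finite automata,'' and your argument is exactly the unpacking of that claim. In particular, your use of the single-method-call invariant to establish that the edge becomes an $\epsilon$-edge in every $\wini{j}$ with $j \neq i$, and then the $\epsilon$-closure of subset states, is precisely the intended reasoning.
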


\subsection{Check for Independence}
We now define the check if transitions in $\wini{i}$ can be taken independently of the current state of all $\wini{j}$ with $P_j \not\subseteq P_i$. 
If the check is positive, we can implement the system efficiently in Solidity: when a method \pmethod{f}{$P_i$} is called, we only need to update the single system $\wini{i}$ and whether the transition can be taken only depends on the available parameters.

Let $s_i$ and $s_i'$ be states in $\wini{i}$ and $A \subseteq \ap$.
Let $G_{(s_i, A, s_i')} = \set{s \mid (s_{i}, A \cup \set{\texttt{in\_}s}, s'_{i}) \in \delta_{i}}$ be the set of all guard propositions that occur on transitions from $s_{i}$ to $s'_{i}$ with $A$.
Let $P_{j_1}, \ldots P_{j_l}$ be the maximum set of parameter subsets such that $P_{j_k} \subseteq P_i$ for $1 \leq k \leq l$.
A transition $(s_i, A, s'_i)$ is \emph{independent} if for all states  $s_{j_1}, \ldots, s_{j_l}$ with $s_{j_k} \in S_{j_k}$ either 
\begin{enumerate}[label=(\roman*)]
	\item $K_i(s_i) \cap \bigcap_{1 \leq k \leq l}{K_{j_k}(s_{j_k})} \subseteq G_{(s_i, A, s_i')}$ or
	\item $(K_i(s_i) \cap \bigcap_{1 \leq k \leq l}{K_{j_k}(s_{j_k})}) \cap G_{(s_i, A, s_i')} = \emptyset$.
\end{enumerate}
The check combines the knowledge of $\wini{i}$ in state $s_i$ with the knowledge of each combination of states from $\wini{j_1}, \ldots, \wini{j_l}$.
For each potential combination, it must be possible to determine whether transition $(s_i, A, s_i')$ can be taken.
If the first condition is satisfied, then the combined knowledge leads to the definite conclusion that $\winningRegion$ is currently in a state where an $A$-transition can be taken.
If the second condition is satisfied, it definitely cannot be taken.
If none of the two is satisfied, then the combined knowledge of $P_i$ and all $P_{j_k}$ is insufficient to reach a definite answer.

Note that some state combinations $s_i, s_{j_1}, \ldots, s_{j_l}$ might be impossible to reach. But then, we have that $K_i(s_i) \cap \bigcap_{1 \leq k \leq l}{K_{j_k}(s_{j_k})} = \emptyset$ and the second condition is satisfied.
The check is successful if all transitions $(s_i, A, s'_i)$ in all $\delta_{i}$ are independent.

\subsection{Interpretation as Infinite-State Machine}
The goal of this section is to construct a state machine $\stateMachine$ from $\wini{1}, \ldots \wini{n}$ such that the original parameterized pastTSL formula $\phi$ is satisfied.
To simplify the presentation, we define $\stateMachine$ as a state machine over $\ap_\params$. Due to the direct correspondence of atomic propositions in $\ap_\params$ to predicate and update terms $\predTerms \cup \updateTerms$, a state machine for $\phi$ can easily be obtained from that.
In the following, we assume that $\winningRegion$ satisfies Requirements~\ref{req:updates} and \ref{req:inputs} and that $\wini{1}, \ldots , \wini{n}$ pass the check for independence.
We construct $\stateMachine$ as follows.

A state in $\stateMachine$ is a collection of $n = |\pow{P}|$ functions $f_1, \ldots, f_n$, where $f_i : \params^m \rightarrow S_i$ if $P_i = (p_{i_1}, \ldots, p_{i_m})$. Each $f_i$ indicates in which state of $\wini{i}$ instance $\instan$ currently is.
The initial state is the collection of functions that all map to the initial states of their respective $\wini{i}$.
For every state $s = (f_1, \ldots, f_n)$ of $\stateMachine$, every $P_i \subseteq P$, and every instance $\instan$, we add a transition where $P_i[\instan]$ takes a step and all other instances stay idle. Let
$f_i(P_i[\instan]) = s_i$, $s_i' \in S_i$, $A \subseteq \ap$,
and $G_{(s_i, A, s_i')} = \set{s \mid (s_i, A\cup\set{\texttt{in\_}s}, s_i') \in \delta_i}$.
Let $P_{j_1}, \ldots P_{j_l}$ be all subsets of $P_i$.
If $K_i(s_i) \cap \bigcap_{1 \leq k \leq l}{K_{j_k}(f_{j_k}(P_{j_k}[\instan]))} \subseteq G_{(s_i, A, s_i')}$, we add the transition $(s, A', s')$ to $\stateMachine$, where $A'$ and $s'$ are defined as follows.
\begin{align*}
	A' &= \set{a[\instan] \mid a \in A} \cup \set{o[\instan'] \mid o \in \selfU, o[\instan'] \neq o[\instan] } \\
	s' &= (f_1, \ldots, f_i[P_i[\instan] \mapsto s_i'], \ldots, f_n)
\end{align*}
The label $A'$ sets all propositions of instance $\instan$ as in $A$ and sets all other input propositions to $\false$. Of all other outputs propositions, it only sets those denoting self-updates to $\true$.

\subsection{Correctness}
Finally, we argue that $\stateMachine$ as defined above satisfies the original specification $\phi$ for all instantiations of its parameters.

\paragraph{Trace Projection.}
To obtain a compact state machine, our specifications require that in each step, exactly one method is called. Like that, the resulting specification describes the control flow projected on each instance.
To argue that $\stateMachine$ satisfies $\phi$, we therefore need to project its traces to the steps relevant for an instance $\instan$.
These are the steps that either include a method call to $\instan$ or a non-self-update of one of $\instan$'s cells.

For $A \subseteq \ap_\params$, we define $A_{\instan}$ as $\set{\alpha \in A \mid \exists a \in \ap \ldot \alpha = a[\instan]}$.
Let $\mathit{traces}(\stateMachine)$ be the set of infinite traces produced by $\stateMachine$.
Given $t \in \mathit{traces}(\stateMachine)$, let $t' = (t[0])_\instan (t[1])_\instan \ldots$.
Now, we define $t_\instan$ to be the trace obtained from $t'$ by deleting all positions $i$ such that $(t[i]_\instan)_{|O} \subseteq \selfU$ and
$\neg \exists \pmethod{f}{$P_i$} \in \methodProps \ldot \pmethod{f}{$P_i$}[\instan] \in t'[i]$.
Note that $t_\instan$ might be a finite trace even if $t$ is infinite.
Since $t_\instan$ only deletes steps from $t$ that do not change the value of the cells, $t_\instan$ still constitutes a sound computation regarding the TSL semantics.
We define $\mathit{traces}_\instan(\stateMachine) = \set{t_\instan \mid t \in \mathit{traces}(\stateMachine)}$.
\paragraph{Correctness Proof.}
Most of the work is done in the following lemma.
We define $\winningRegion_\instan$ as the state machine that replaces the transition labels of $\winningRegion$ with their instantiations according to $\instan$, i.e., if $(s, A, s') \in \winningRegion$, then $(s, A[\instan], s') \in \winningRegion_\instan$.
Not every infinite run of $\stateMachine$ corresponds to an infinite run in $\winningRegion_\instan$ for every $\instan$. 
However, we show that if the run has infinitely many $\instan$-transitions, then it can be mapped to an infinite trace in $\winningRegion_\instan$.
The proof of the lemma can be found in \Cref{a:lem:correctness}.
\begin{restatable}{lemma}{correctness}
		\label{lem:correctness}
		For every instance $\instan$, $\mathit{traces}_\instan(\stateMachine)  = \mathit{traces}(\winningRegion_\instan)$.
\end{restatable}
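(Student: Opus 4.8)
The plan is to prove the set equality by establishing both inclusions, using as a backbone an invariant that ties the distributed knowledge sets of the subsystems $\wini{i}$ to the states $\winningRegion$ can occupy. Fix an instance $\instan$. For a reachable state $s = (f_1,\ldots,f_n)$ of $\stateMachine$ and a parameter set $P_i$, write $\hat K_{P_i}(s) = \bigcap_{P_j \subseteq P_i} K_j(f_j(P_j[\instan]))$ for the combined knowledge that $P_i$ gathers from itself and its sub-parameter systems; this is exactly the set appearing in the guard of the $\stateMachine$ construction. The invariant I would maintain, by induction along a run of $\stateMachine$, is that after processing a trace whose $\instan$-projection reads the (de-instantiated) prefix $w$, every individual knowledge set $K_j(f_j(P_j[\instan]))$ contains all $\winningRegion$-states reachable on $w$, so in particular it contains the current run state(s) of $\winningRegion_\instan$. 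The base case holds because each $f_j$ starts in the initial state of $\wini{j}$, whose knowledge set is $\set{s^0_\winningRegion}$ by the subset construction; the inductive step uses \Cref{prop:knowledge} for the subsystem that moves together with the clause of the same proposition stating that idle subsystems (for $P_j$ not containing the fired parameters) keep their knowledge state.

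For the inclusion $\mathit{traces}_\instan(\stateMachine)\cup\mathit{finTraces}_\instan(\stateMachine) \subseteq \mathit{traces}(\winningRegion_\instan)\cup\mathit{finTraces}(\winningRegion_\instan)$ (soundness), I would first argue that the projection retains only steps carrying a method call relevant to $\instan$: by construction the label $A'$ of an $\stateMachine$-transition contributes a non-self-update of one of $\instan$'s fields only when the stepped instance shares the relevant parameter subset with $\instan$, and then \Cref{req:updates} forces a corresponding method call, so such a step survives projection as an $\instan$-method step. Hence $t_\instan$ is a sequence of instantiated $\winningRegion$-labels. Whenever $\stateMachine$ fired such a step for parameter set $P_i$, the construction added the transition precisely because guard condition~(i) held, i.e.\ $\hat K_{P_i}(s) \subseteq G_{(s_i,A,s_i')}$. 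Since $G_{(s_i,A,s_i')} \subseteq K_i(f_i(P_i[\instan]))$ and, by \Cref{prop:knowledge2}, every state of $G_{(s_i,A,s_i')}$ admits the $A$-transition in $\winningRegion$, the invariant guarantees that every $\winningRegion$-state consistent with $w$ can follow the step. A consistent choice of successors then yields a run of $\winningRegion_\instan$ producing $t_\instan$; tracking the reachable set keeps it nonempty for infinite traces.

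For the reverse inclusion (completeness), given a run of $\winningRegion_\instan$ with labels $A_0[\instan], A_1[\instan], \ldots$, I would mirror it in $\stateMachine$ by stepping only instance $\instan$ at each point and leaving all other instances idle. The resulting $\stateMachine$-trace projects back to exactly the given word, since $A'_\instan = A[\instan]$: the self-update outputs the construction adds for other instances are $\instan'$-labels and are discarded by the $\instan$-projection, and no step is deleted because each carries an $\instan$-method call. The only nontrivial point is that each mirrored transition is present in $\stateMachine$, i.e.\ that guard~(i) holds at every step. Here the independence check does the work: the current run state lies in $G_{(s_i,A,s_i')}$ by \Cref{prop:knowledge2}, so $\hat K_{P_i}(s)\cap G_{(s_i,A,s_i')}\neq\emptyset$, whence condition~(ii) fails and the assumed independence of the transition forces condition~(i). \Cref{req:inputs} ensures that the valuation of predicates with parameters outside $P_i$ cannot block the step. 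Finite and infinite traces are handled uniformly because the union on both sides absorbs projected traces that become finite once $\instan$ stops being active.

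The main obstacle I anticipate is the invariant itself. Each subsystem $\wini{j}$ observes only the method calls for its own parameter subset (all other steps become $\epsilon$-moves in the determinization), so individually $K_j(f_j(P_j[\instan]))$ \emph{over}-approximates $\winningRegion$'s reachable set, and the argument must show that the combined knowledge $\hat K_{P_i}(s)$, intersecting only over subsets of $P_i$, already decides enabledness of a $P_i$-transition exactly as $\winningRegion$ would. Making this precise --- and showing that the independence check is precisely the condition guaranteeing that the combined knowledge reaches a decisive verdict (i)/(ii) at every fired transition --- is the technical heart of the proof, and it is where \Cref{prop:knowledge2,prop:knowledge} and both \Cref{req:updates,req:inputs} must be combined.
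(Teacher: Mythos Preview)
Your approach is essentially the paper's: run $\stateMachine$ and $\winningRegion_\instan$ in lockstep under the invariant that the current $\winningRegion$-state lies in $\bigcap_{j} K_j(f_j(P_j[\instan]))$. The paper tracks a single state $s_\winningRegion$ rather than the reachable set, but this is the same invariant. Your completeness direction is in fact \emph{more} explicit than the paper's: you spell out why guard~(i) must hold at each mirrored step via the independence check (the intersection meets $G$, so case~(ii) fails, hence~(i)), whereas the paper simply writes ``by construction of $\stateMachine$''.

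There is, however, one genuine case you gloss over in the soundness direction. When $\stateMachine$ fires a transition for an instance $\instan'\neq\instan$ with $P_i[\instan']=P_i[\instan]$, the step \emph{does} survive the $\instan$-projection (it carries the method call $\pmethod{f}{$P_i$}[\instan']=\pmethod{f}{$P_i$}[\instan]$), but the de-instantiation $B$ of $A'_{\instan}$ is \emph{not} the original $\winningRegion$-label $A$: for every proposition with parameters $P_j\not\subseteq P_i$ on which $\instan$ and $\instan'$ disagree, $B$ has the input false and the output a self-update. Your argument only shows that $s_\winningRegion$ admits the $A$-transition in $\winningRegion$; you still need that it admits the $B$-transition. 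This is precisely where the paper deploys \Cref{req:updates} (forcing $A_{|O}=B_{|O}$, since all non-self-updates in $A$ are $P_i$-parameterised) together with \Cref{req:inputs} (allowing the irrelevant inputs to be flipped without changing the target or outputs). You invoke \Cref{req:inputs} in the completeness paragraph, but there it is not needed: when you step exactly $\instan$, the construction gives $A'_\instan=A[\instan]$ on the nose. Move that use of \Cref{req:inputs} to the soundness side to close the $\instan'\neq\instan$ case, and the argument goes through.
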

From the above lemma we directly obtain the desired correctness result.
\begin{theorem}
	Let $\varphi = \forall \parameter{p}_1, \ldots \parameter{p}_m \ldot \psi$ be a parameterized pastTSL formula and $\psi_\text{LTL}$ its LTL approximation.
	If $\winningRegion$ is the winning region of $\psi_\text{LTL}$, $\winningRegion$ satisfies Requirements 1 and 2, and can be split into $\wini{1}, \ldots, \wini{n}$ such that the check for independence is successful, then for every $\instan$, $\stateMachine$ defines a strategy for $\varphi[\instan]$.
\end{theorem}
\begin{proof}
	Let $\instan : P \rightarrow \params$ be an instantiation of the parameters $\parameter{p}_1, \ldots \parameter{p}_m$, $\iota \in \inputStream$ be an input stream, and $\assign$ be an assignment function.
	First, for any trace $t \in \mathit{traces}_\mu(\stateMachine)$ with $\mathit{LTL}(\iota, \mathit{comp}(t), \assign) = t$ (see \Cref{par:tsl_approx} for the definition), we have that $t \in \mathit{traces}(\winningRegion_\instan)$ because of \Cref{lem:correctness}. As all traces of $\winningRegion$ satisfy $\psi_\text{LTL}$, $t \models \psi_\text{LTL}[\instan]$ (since $\instan$ is only a renaming of atomic propositions on the LTL-level). 
	By \Cref{prop:soundness}, we obtain $\iota, \mathit{comp}(t) \models \psi[\instan]$.
	Second, as $\winningRegion$ implements the set of all strategies satisfying $\psi_\text{LTL}$, with the same reasoning, there is at least one $t$ in $\stateMachine$ with $\mathit{LTL}(\iota, \mathit{comp}(t), \assign) = t$.
\end{proof}

\subsection{Extension to Existential Quantifiers}\label{subsec:limitations}
Currently, our approach cannot handle existential quantifiers. In the example of the ERC20 contract, this forbids us to use a field \pfield{funds}{m} to store the balance of all users of the contract. If we were to try, we could use an additional parameter \texttt{r} for the recipient of the tokens and state the following.
\begin{align*}
	\forall \texttt{m}, \texttt{n}, \texttt{r} \ldot & \Globally( \pmethod{transferFrom}{m,n,r} \lor \pmethod{transfer}{m,r} \\
	& \quad \to \update{\pfield{funds}{m}}{\pfield{funds}{m} - \texttt{arg@amount}} \\
	& \qquad \land \update{\pfield{funds}{r}}{\pfield{funds}{r} + \texttt{arg@amount}})
\end{align*}
However, for completeness, we would have to specify that the \field{funds} field does not spuriously increase, which would require existential quantifiers.
\begin{align*}
	\forall \texttt{r} \ldot & \Globally( \update{\pfield{funds}{r}}{\pfield{funds}{r} + \texttt{arg@amount}} \\
	& \quad \to \exists \texttt{m} \ldot \exists \texttt{n} \ldot \pmethod{transferFrom}{m,n,r} \lor \pmethod{transfer}{m,r})
\end{align*}
A similar limitation stems from \Cref{req:updates}, which requires that a field parameterized with set $P_i$ can only be updated by a method that is also parameterized with $P_i$. As for existential quantifiers, we would otherwise not be able to distinguish spurious updates from intended updates of cells.
While it might be challenging to extend the approach with arbitrary existential quantification, it should be possible for future work to include existential quantification that prevents spurious updates.
One could, for example, define some sort of ``lazy synthesis'', which only does a non-self-update when necessary.

\section{Implementation and Evaluation}  
\label{sec:impl_eval}

\subsection{Implementation}
\label{sec:implementation}
\begin{figure*}[t]
	\centering
	\includegraphics[width=\linewidth]{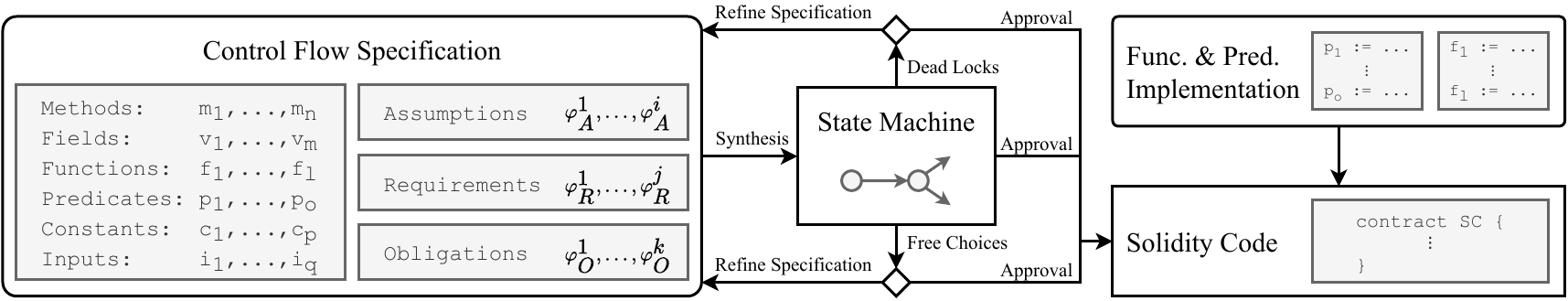}    
	\caption{Workflow of our smart contract control flow synthesis.}
	\label{fig:workflow}
\end{figure*}

We implemented our approach in a toolchain consisting of several steps.
First, we translate the \pastTSL specification into a \pastLTL formula using TSLtools~\cite{tsltools}, which we adapted to handle past-time operators.
We then synthesize a state machine using BDD-based symbolic synthesis.
To make our lives easier, we implemented a simple analysis to detect free choices and deadlocks, which both indicate potential specification errors.
If the specification contains parameters, we split the resulting state machine as described in \Cref{sec:paramSynt}.
Lastly, the state machines are translated to Solidity code. 
The toolchain is implemented in our tool \tool consisting of approximately 3000 lines of Python code (excluding TSLtools).
From a user perspective, we obtain the workflow depicted in~\Cref{fig:workflow}.

\paragraph{Synthesis from PastLTL.}
The first part of our toolchain implements a symbolic synthesis algorithm for \pastLTL.
As such, it can also be employed outside the context of smart contract synthesis.
We are not aware of any other tool that implements pastLTL synthesis.
We first build the safety automaton of the specification using a representation as BDDs. 
For \pastLTL, a symbolic approach is especially efficient due to the long-known fact that for evaluating a \pastLTL formula at time point $i$, it is sufficient to know the value of all subformulas at point $i-1$~\cite{recursive_evaluation}.
Afterwards, we symbolically extract the winning region from the safety region with a classic fixpoint attractor construction.
Finally, we minimize the resulting state machine using an explicit implementation of Hopcroft's minimization algorithm~\cite{hopcroft1971n}.

\paragraph{State Machine Analysis.}
We analyze the winning region for free choices and potential deadlocks, which both usually indicate specification errors.
A free choice is a state which, for the same input, has multiple outgoing transitions into the winning region.
If there are free choices and the developer has no preference which one is chosen, \tool nondeterministically commits to one option.
For the deadlock detection, we require the user to label \emph{determined predicate terms}. We call a predicate determined if either 1) it becomes a constant at some time or 2) only method calls can change its value.
An example of class 1 are predicates over the time, e.g., $\texttt{time} > \texttt{cTime()}$: if it is $\true$ at some point, it will never be $\false$ again.
A class 2 example would be a predicate that counts whether the number stored in a field has passed a fixed threshold.
A predicate like $\texttt{msg.sender} = \texttt{owner()}$, on the other hand, is not determined as the evaluation changes with the input \texttt{msg.sender}.
\tool au\-to\-matically detects if, at some point, there is an evaluation of the determined predicate terms that is allowed by the assumptions but for which there is no valid transition.
It then warns of a potential deadlock.

\paragraph{Translation to Solidity.}\label{subsec:dot_to_sol}
For the translation, the developer needs to provide the implementation of all predicates and functions, as they are uninterpreted (which makes the synthesis feasible after all).
Some of the most common functions and predicates (e.g., equality and addition) are automatically replaced by \tool. 
The \texttt{owner} and \texttt{msg.sender} keywords are translated automatically; the owner is set in the constructor.
Conceptually, the translation to Solidity is straightforward.
For each method of the contract, we create a function that contains the state machine logic for that particular method.
For parameterized specifications, the contract is augmented with a mapping recording the knowledge labels (c.f. \Cref{sec:paramSynt}).
The parameters other than the sender are included as arguments.
Following~\cite{DBLP:conf/fc/MavridouL18}, we also add automatic protection against reentrancy attacks by setting a Boolean flag if a method is currently executing.

\subsection{Evaluation}
\label{sec:eval}
\label{subsec:benchmarks}
The goal of our evaluation is to show that 1) parameterized pastTSL is indeed a suitable logic for specifying smart contract state machines and 2) that our implemented toolchain is efficient.
To do so, we specified and synthesized ten different smart contracts with a non-trivial temporal control flow using pastTSL specifications with and without parameters.
A detailed description of all benchmarks is provided in \Cref{a:benchmarks}.
The most challenging benchmark to specify was the NFT auction, a parameterized specification for a contract actively maintained by Avolabs. Its reference implementation has over 1400 lines of code. We manually extracted 30 past-time formulas from the README of the contract provided on the GitHub of Avolabs~\cite{NFTReference}.

All experiments were run on a 2020 Macbook with an Apple M1 chip, 16GB RAM, running MacOS.
The results are shown in \Cref{tab:results}.
We report the size of the specification and of the resulting state machine as well as the running time of the synthesis procedure itself and the translation to Solidity code.
\begin{table*}[t]
	\caption{Sizes of the specifications and state machines as well as the average running time of \tool.
	\#Forms. is the number of individual past-time formulas, \#Nodes is the number of nodes of the AST. The state machine size is the sum of the states/transitions of the split state machines. The synthesis and translation times are the respective average on 10 runs of the same benchmark.}
	\label{tab:results}
	\small
	\def\arraystretch{0.9}
	\setlength{\tabcolsep}{3pt}
	\centering
	\begin{tabular}{r||>{\centering}p{1.2cm}|>{\centering}p{1.2cm}|>{\centering}p{1.2cm}|>{\centering}p{1.2cm}|>{\centering}p{1.5cm}|>{\centering\arraybackslash}p{1.5cm}}
		& \multicolumn{2}{c|}{Specification} & \multicolumn{2}{c|}{State Machine} & \multicolumn{2}{c}{Avg. Time (s)} \\
		Contract & \footnotesize{\#Forms.} & \footnotesize{\#Nodes} & \footnotesize{\#States} & \footnotesize{\#Trans.} & \footnotesize{Synth.} & \footnotesize{Transl.} \\
		\hline\hline
		Asset Transfer & 36 & 216 & 8 & 14 & 5.9996 & 0.0053 \\
		Blinded Auction & 19 & 218 & 5 & 8 & 1.5446 & 0.0026 \\
		Coin Toss & 27 & 154 & 5 & 7 & 1.6180 & 0.0029 \\
		Crowd Funding & 17 & 100 & 4 & 8 & 0.2178 & 0.0026 \\
		ERC20 & 15 & 140 & 9 & 5 & 0.4812 & 0.0033 \\
		ERC20 Extended & 19 & 244 & 10 & 7 & 1.9608 & 0.0040 \\
		NFT Auction & 30 & 325 & 12 & 15 & 12.1853 & 0.0080 \\
		Simple Auction & 15 & 83 & 4 & 7 & 0.1362 & 0.0026 \\
		Ticket System & 13 & 97 & 4 & 6 & 0.1812 & 0.0028 \\
		Voting & 17 & 98 & 6 & 5 & 0.1478 & 0.0023
	\end{tabular}
\end{table*}
Most importantly, the evaluation shows that specifying and automatically generating the non-trivial state machine logic of a smart contract is possible.
We successfully synthesized Solidity code for state machines of up to 12 states.
The evaluation also shows that our toolchain is efficient: synthesis itself took up to 12 seconds; in most cases, \tool synthesizes a state machine in less than two seconds.
The translation of the state machine into Solidity code is nearly instantaneous.

\section{Conclusion}
We have described the synthesis of Solidity code from specifications given in \pastTSL equipped with universally quantified parameters.
Our approach is the first that facilitates a comprehensive specification of the implicit state machine of a smart contract, including the data flow of the contract's fields and guards on the methods' arguments.
The algorithm proceeds in two steps: first, we translate the specification to pastTSL.
While we have shown that \pastTSL realizability without parameters is undecidable in general, solutions can be obtained via a sound reduction to LTL.
In a second step, we split the resulting system into a hierarchical structure of multiple systems, which constitutes a finite representation of a system implementing the original formula and also enables a feasible handling when translated to Solidity.
Our prototype tool \tool implements the synthesis toolchain, including an analysis of the state machine regarding potential specification errors. 

For future work, we aim to extend our approach to specifications given in \pastTSL with alternating parameter quantifiers.
There are also several exciting possibilities to combine our work with other synthesis and verification techniques. One avenue is to automatically prove necessary assumptions in deductive verification tools~\cite{dharanikota2021celestial}, especially for assumptions that state invariants maintained by method calls.
Another possibility is to synthesize function and predicate implementations in the spirit of~\cite{cispa3674}.
Finally, now that we have developed the algorithmic foundations and implemented a first prototype, we aim to conduct a thorough evaluation of our approach in comparison to hand-written (non-formal) approaches.

%
%
%
 \bibliographystyle{splncs04}
 \bibliography{bibliography}

\begin{thebibliography}{10}
\providecommand{\url}[1]{\texttt{#1}}
\providecommand{\urlprefix}{URL }
\providecommand{\doi}[1]{https://doi.org/#1}

\bibitem{DBLP:journals/corr/abs-2205-07529}
Antonino, P., Ferreira, J., Sampaio, A., Roscoe, A.W.: Specification is law:
  Safe creation and upgrade of ethereum smart contracts. CoRR
  \textbf{abs/2205.07529} (2022). \doi{10.48550/arXiv.2205.07529},
  \url{https://doi.org/10.48550/arXiv.2205.07529}

\bibitem{NFTReference}
Avolabs: {NFT} auction reference contract.
  \url{https://github.com/avolabs-io/nft-auction} (2022), accessed: 2022-07-05

\bibitem{Azure}
Azure: Asset transfer sample from the azure blockchain workbench.
  \url{https://github.com/Azure-Samples/blockchain/tree/master/blockchain-workbench/application-and-smart-contract-samples/asset-transfer}
  (2021), accessed: 2022-07-05

\bibitem{AssetTransferReference}
Azure: Asset transfer reference contract.
  \url{https://github.com/Azure-Samples/blockchain/blob/master/blockchain-workbench/application-and-smart-contract-samples/asset-transfer/ethereum/AssetTransfer.sol}
  (2022), accessed: 2022-07-05

\bibitem{cispa3674}
Choi, W., Finkbeiner, B., Piskac, R., Santolucito, M.: Can reactive synthesis
  and syntax-guided synthesis be friends? In: 43rd ACM SIGPLAN Conference on
  Programming Language Design and Implementation (PLDI) (June 2022),
  \url{https://publications.cispa.saarland/3674/}

\bibitem{DBLP:journals/insk/CiccioCDGLLMPTW19}
Ciccio, C.D., Cecconi, A., Dumas, M., Garc{\'{\i}}a{-}Ba{\~{n}}uelos, L.,
  L{\'{o}}pez{-}Pintado, O., Lu, Q., Mendling, J., Ponomarev, A., Tran, A.B.,
  Weber, I.: Blockchain support for collaborative business processes. Inform.
  Spektrum  \textbf{42}(3),  182--190 (2019). \doi{10.1007/s00287-019-01178-x},
  \url{https://doi.org/10.1007/s00287-019-01178-x}

\bibitem{pastLTL}
Cimatti, A., Roveri, M., Sheridan, D.: Bounded verification of past {LTL}. In:
  Hu, A.J., Martin, A.K. (eds.) Formal Methods in Computer-Aided Design, 5th
  International Conference, {FMCAD} 2004, Austin, Texas, USA, November 15-17,
  2004, Proceedings. Lecture Notes in Computer Science, vol.~3312, pp.
  245--259. Springer (2004). \doi{10.1007/978-3-540-30494-4\_18},
  \url{https://doi.org/10.1007/978-3-540-30494-4\_18}

\bibitem{dharanikota2021celestial}
Dharanikota, S., Mukherjee, S., Bhardwaj, C., Rastogi, A., Lal, A.: Celestial:
  A smart contracts verification framework. In: Proceedings of the 21st
  Conference on Formal Methods in Computer-Aided Design (FMCAD 2021). pp.
  133--142 (2021)

\bibitem{Ethereum}
Ethereum: Introduction to ethereum.
  \url{https://ethereum.org/en/developers/docs/intro-to-ethereum/} (2021),
  accessed: 2022-07-05

\bibitem{FinkbeinerHP22}
Finkbeiner, B., Heim, P., Passing, N.: Temporal stream logic modulo theories.
  In: Bouyer, P., Schr{\"{o}}der, L. (eds.) Foundations of Software Science and
  Computation Structures - 25th International Conference, {FOSSACS} 2022, Held
  as Part of the European Joint Conferences on Theory and Practice of Software,
  {ETAPS} 2022, Munich, Germany, April 2-7, 2022, Proceedings. Lecture Notes in
  Computer Science, vol. 13242, pp. 325--346. Springer (2022).
  \doi{10.1007/978-3-030-99253-8\_17},
  \url{https://doi.org/10.1007/978-3-030-99253-8\_17}

\bibitem{finkbeiner2023reactive}
Finkbeiner, B., Hofmann, J., Kohn, F., Passing, N.: Reactive synthesis of smart
  contract control flows (2023)

\bibitem{DBLP:conf/haskell/Finkbeiner0PS19}
Finkbeiner, B., Klein, F., Piskac, R., Santolucito, M.: Synthesizing functional
  reactive programs. In: Eisenberg, R.A. (ed.) Proceedings of the 12th {ACM}
  {SIGPLAN} International Symposium on Haskell, Haskell@ICFP 2019, Berlin,
  Germany, August 18-23, 2019. pp. 162--175. {ACM} (2019).
  \doi{10.1145/3331545.3342601}, \url{https://doi.org/10.1145/3331545.3342601}

\bibitem{FinkbeinerKPS19}
Finkbeiner, B., Klein, F., Piskac, R., Santolucito, M.: {T}emporal {S}tream
  {L}ogic: {S}ynthesis {B}eyond the {B}ools. In: Dillig, I., Tasiran, S. (eds.)
  Computer Aided Verification - 31st International Conference, {CAV} 2019, New
  York City, NY, USA, July 15-18, 2019, Proceedings, Part {I}. Lecture Notes in
  Computer Science, vol. 11561, pp. 609--629. Springer (2019).
  \doi{10.1007/978-3-030-25540-4\_35}

\bibitem{SimpleAuctionReference}
Fourswords: Simple auction reference contract.
  \url{https://fourswords.io/docs/smartcontracts/solidity-guide/solidity-examples/}
  (2022), accessed: 2022-07-05

\bibitem{VotingReference}
Fourswords: Voting reference contract.
  \url{https://fourswords.io/docs/smartcontracts/solidity-guide/solidity-examples/}
  (2022), accessed: 2022-07-05

\bibitem{DBLP:journals/corr/abs-2101-07232}
Geier, G., Heim, P., Klein, F., Finkbeiner, B.: Syntroids: Synthesizing a game
  for fpgas using temporal logic specifications. CoRR  \textbf{abs/2101.07232}
  (2021), \url{https://arxiv.org/abs/2101.07232}

\bibitem{recursiveEvaluation}
Havelund, K., Rosu, G.: Synthesizing monitors for safety properties. In:
  Katoen, J., Stevens, P. (eds.) Tools and Algorithms for the Construction and
  Analysis of Systems, 8th International Conference, {TACAS} 2002, Held as Part
  of the Joint European Conference on Theory and Practice of Software, {ETAPS}
  2002, Grenoble, France, April 8-12, 2002, Proceedings. Lecture Notes in
  Computer Science, vol.~2280, pp. 342--356. Springer (2002).
  \doi{10.1007/3-540-46002-0\_24},
  \url{https://doi.org/10.1007/3-540-46002-0\_24}

\bibitem{recursive_evaluation}
Havelund, K., Rosu, G.: Synthesizing monitors for safety properties. In:
  Katoen, J., Stevens, P. (eds.) Tools and Algorithms for the Construction and
  Analysis of Systems, 8th International Conference, {TACAS} 2002, Held as Part
  of the Joint European Conference on Theory and Practice of Software, {ETAPS}
  2002, Grenoble, France, April 8-12, 2002, Proceedings. Lecture Notes in
  Computer Science, vol.~2280, pp. 342--356. Springer (2002).
  \doi{10.1007/3-540-46002-0\_24},
  \url{https://doi.org/10.1007/3-540-46002-0\_24}

\bibitem{hopcroft1971n}
Hopcroft, J.: An n log n algorithm for minimizing states in a finite automaton.
  In: Theory of machines and computations, pp. 189--196. Elsevier (1971)

\bibitem{JacobsB14}
Jacobs, S., Bloem, R.: Parameterized synthesis. Log. Methods Comput. Sci.
  \textbf{10}(1) (2014). \doi{10.2168/LMCS-10(1:12)2014},
  \url{https://doi.org/10.2168/LMCS-10(1:12)2014}

\bibitem{DBLP:conf/ndss/KalraGDS18}
Kalra, S., Goel, S., Dhawan, M., Sharma, S.: {ZEUS:} analyzing safety of smart
  contracts. In: 25th Annual Network and Distributed System Security Symposium,
  {NDSS} 2018, San Diego, California, USA, February 18-21, 2018. The Internet
  Society (2018),
  \url{http://wp.internetsociety.org/ndss/wp-content/uploads/sites/25/2018/02/ndss2018\ss\_09-1\_Kalra\_paper.pdf}

\bibitem{Party}
Khalimov, A., Jacobs, S., Bloem, R.: {PARTY} parameterized synthesis of token
  rings. In: Sharygina, N., Veith, H. (eds.) Computer Aided Verification - 25th
  International Conference, {CAV} 2013, Saint Petersburg, Russia, July 13-19,
  2013. Proceedings. Lecture Notes in Computer Science, vol.~8044, pp.
  928--933. Springer (2013). \doi{10.1007/978-3-642-39799-8\_66},
  \url{https://doi.org/10.1007/978-3-642-39799-8\_66}

\bibitem{KhalimovJB13}
Khalimov, A., Jacobs, S., Bloem, R.: Towards efficient parameterized synthesis.
  In: Giacobazzi, R., Berdine, J., Mastroeni, I. (eds.) Verification, Model
  Checking, and Abstract Interpretation, 14th International Conference, {VMCAI}
  2013, Rome, Italy, January 20-22, 2013. Proceedings. Lecture Notes in
  Computer Science, vol.~7737, pp. 108--127. Springer (2013).
  \doi{10.1007/978-3-642-35873-9\_9},
  \url{https://doi.org/10.1007/978-3-642-35873-9\_9}

\bibitem{tsltools}
Klein, F., Santolucito, M.: Tsl tools.
  \url{https://github.com/kleinreact/tsltools} (2019)

\bibitem{KupfermanV99}
Kupferman, O., Vardi, M.Y.: Model checking of safety properties. In: Halbwachs,
  N., Peled, D.A. (eds.) Computer Aided Verification, 11th International
  Conference, {CAV} '99, Trento, Italy, July 6-10, 1999, Proceedings. Lecture
  Notes in Computer Science, vol.~1633, pp. 172--183. Springer (1999).
  \doi{10.1007/3-540-48683-6\_17},
  \url{https://doi.org/10.1007/3-540-48683-6\_17}

\bibitem{DBLP:journals/corr/abs-1812-08829}
Lahiri, S.K., Chen, S., Wang, Y., Dillig, I.: Formal specification and
  verification of smart contracts for azure blockchain. CoRR
  \textbf{abs/1812.08829} (2018), \url{http://arxiv.org/abs/1812.08829}

\bibitem{DBLP:journals/tse/Lamport77}
Lamport, L.: Proving the correctness of multiprocess programs. {IEEE} Trans.
  Software Eng.  \textbf{3}(2),  125--143 (1977).
  \doi{10.1109/TSE.1977.229904}, \url{https://doi.org/10.1109/TSE.1977.229904}

\bibitem{DBLP:conf/time/LisitsaP05}
Lisitsa, A., Potapov, I.: Temporal logic with predicate lambda-abstraction. In:
  12th International Symposium on Temporal Representation and Reasoning {(TIME}
  2005), 23-25 June 2005, Burlington, Vermont, {USA}. pp. 147--155. {IEEE}
  Computer Society (2005). \doi{10.1109/TIME.2005.34},
  \url{https://doi.org/10.1109/TIME.2005.34}

\bibitem{DBLP:journals/spe/Lopez-PintadoGD19}
L{\'{o}}pez{-}Pintado, O., Garc{\'{\i}}a{-}Ba{\~{n}}uelos, L., Dumas, M.,
  Weber, I., Ponomarev, A.: Caterpillar: {A} business process execution engine
  on the ethereum blockchain. Softw. Pract. Exp.  \textbf{49}(7),  1162--1193
  (2019). \doi{10.1002/spe.2702}, \url{https://doi.org/10.1002/spe.2702}

\bibitem{RoderickTSLTheories}
Maderbacher, B., Bloem, R.: Reactive synthesis modulo theories using
  abstraction refinement. CoRR  \textbf{abs/2108.00090} (2021),
  \url{https://arxiv.org/abs/2108.00090}

\bibitem{MarkgrafHLNN20}
Markgraf, O., Hong, C., Lin, A.W., Najib, M., Neider, D.: Parameterized
  synthesis with safety properties. In: d.~S.~Oliveira, B.C. (ed.) Programming
  Languages and Systems - 18th Asian Symposium, {APLAS} 2020, Fukuoka, Japan,
  November 30 - December 2, 2020, Proceedings. Lecture Notes in Computer
  Science, vol. 12470, pp. 273--292. Springer (2020).
  \doi{10.1007/978-3-030-64437-6\_14},
  \url{https://doi.org/10.1007/978-3-030-64437-6\_14}

\bibitem{DBLP:conf/fc/MavridouL18}
Mavridou, A., Laszka, A.: Designing secure ethereum smart contracts: {A} finite
  state machine based approach. In: Meiklejohn, S., Sako, K. (eds.) Financial
  Cryptography and Data Security - 22nd International Conference, {FC} 2018,
  Nieuwpoort, Cura{\c{c}}ao, February 26 - March 2, 2018, Revised Selected
  Papers. Lecture Notes in Computer Science, vol. 10957, pp. 523--540. Springer
  (2018). \doi{10.1007/978-3-662-58387-6\_28},
  \url{https://doi.org/10.1007/978-3-662-58387-6\_28}

\bibitem{MavridouLSD19}
Mavridou, A., Laszka, A., Stachtiari, E., Dubey, A.: Verisolid:
  Correct-by-design smart contracts for ethereum. In: Goldberg, I., Moore, T.
  (eds.) Financial Cryptography and Data Security - 23rd International
  Conference, {FC} 2019, Frigate Bay, St. Kitts and Nevis, February 18-22,
  2019, Revised Selected Papers. Lecture Notes in Computer Science, vol. 11598,
  pp. 446--465. Springer (2019). \doi{10.1007/978-3-030-32101-7\_27},
  \url{https://doi.org/10.1007/978-3-030-32101-7\_27}

\bibitem{DBLP:journals/tcs/Mayr03}
Mayr, R.: Undecidable problems in unreliable computations. Theor. Comput. Sci.
  \textbf{297}(1-3),  337--354 (2003). \doi{10.1016/S0304-3975(02)00646-1},
  \url{https://doi.org/10.1016/S0304-3975(02)00646-1}

\bibitem{Mealy55}
Mealy, G.H.: A method for synthesizing sequential circuits. The Bell System
  Technical Journal  \textbf{34}(5),  1045--1079 (1955).
  \doi{10.1002/j.1538-7305.1955.tb03788.x}

\bibitem{nehai2018model}
Nehai, Z., Piriou, P.Y., Daumas, F.: Model-checking of smart contracts. In:
  2018 IEEE International Conference on Internet of Things (iThings) and IEEE
  Green Computing and Communications (GreenCom) and IEEE Cyber, Physical and
  Social Computing (CPSCom) and IEEE Smart Data (SmartData). pp. 980--987. IEEE
  (2018)

\bibitem{ERC20Spec}
{Open Zeppelin}: Erc20 token system documentation from open zeppelin.
  \url{https://docs.openzeppelin.com/contracts/2.x/api/token/erc20} (2022),
  accessed: 2022-07-05

\bibitem{ERC20Reference}
{Open Zeppelin}: Erc20 token system reference contract.
  \url{https://github.com/OpenZeppelin/openzeppelin-contracts/blob/9b3710465583284b8c4c5d2245749246bb2e0094/contracts/token/ERC20/ERC20.sol}
  (2022), accessed: 2022-07-05

\bibitem{DBLP:conf/sp/PermenevDTDV20}
Permenev, A., Dimitrov, D.K., Tsankov, P., Drachsler{-}Cohen, D., Vechev, M.T.:
  Verx: Safety verification of smart contracts. In: 2020 {IEEE} Symposium on
  Security and Privacy, {SP} 2020, San Francisco, CA, USA, May 18-21, 2020. pp.
  1661--1677. {IEEE} (2020). \doi{10.1109/SP40000.2020.00024},
  \url{https://doi.org/10.1109/SP40000.2020.00024}

\bibitem{CrowdFundingReference}
{Program the Blockchain}: Coin funding reference contract.
  \url{https://programtheblockchain.com/posts/2018/01/19/writing-a-crowdfunding-contract-a-la-kickstarter/}
  (2022), accessed: 2022-07-05

\bibitem{CoinTossReference}
{Program the Blockchain}: Coin toss reference contract.
  \url{https://programtheblockchain.com/posts/2018/03/16/flipping-a-coin-in-ethereum/}
  (2022), accessed: 2022-07-05

\bibitem{SimpleAuction}
Solidity: Simple auction protocol from the solidity documentation.
  \url{https://docs.soliditylang.org/en/v0.6.8/solidity-by-example.html\#simple-open-auction}
  (2021), accessed: 2022-07-05

\bibitem{BlindedAuctionReference}
Solidity: Blinded auction reference contract.
  \url{https://docs.soliditylang.org/en/v0.5.11/solidity-by-example.html\#id2}
  (2022), accessed: 2022-07-05

\bibitem{DBLP:conf/sp/StephensFMLD21}
Stephens, J., Ferles, K., Mariano, B., Lahiri, S.K., Dillig, I.: Smartpulse:
  Automated checking of temporal properties in smart contracts. In: 42nd {IEEE}
  Symposium on Security and Privacy, {SP} 2021, San Francisco, CA, USA, 24-27
  May 2021. pp. 555--571. {IEEE} (2021). \doi{10.1109/SP40001.2021.00085},
  \url{https://doi.org/10.1109/SP40001.2021.00085}

\bibitem{DBLP:journals/corr/abs-1906-02906}
Suvorov, D., Ulyantsev, V.: Smart contract design meets state machine
  synthesis: Case studies. CoRR  \textbf{abs/1906.02906} (2019),
  \url{http://arxiv.org/abs/1906.02906}

\bibitem{DBLP:conf/vstte/0001LCPDBNF19}
Wang, Y., Lahiri, S.K., Chen, S., Pan, R., Dillig, I., Born, C., Naseer, I.,
  Ferles, K.: Formal verification of workflow policies for smart contracts in
  azure blockchain. In: Chakraborty, S., Navas, J.A. (eds.) Verified Software.
  Theories, Tools, and Experiments - 11th International Conference, {VSTTE}
  2019, New York City, NY, USA, July 13-14, 2019, Revised Selected Papers.
  Lecture Notes in Computer Science, vol. 12031, pp. 87--106. Springer (2019).
  \doi{10.1007/978-3-030-41600-3\_7},
  \url{https://doi.org/10.1007/978-3-030-41600-3\_7}

\bibitem{zupan2020secure}
Zupan, N., Kasinathan, P., Cuellar, J., Sauer, M.: Secure smart contract
  generation based on petri nets. In: Blockchain Technology for Industry 4.0,
  pp. 73--98. Springer (2020)

\end{thebibliography}

\appendix
\label{a:prelims}

\section{Proof of Lemma \ref{lem:correctness}}
\label{a:lem:correctness}
\correctness*
	In the following, we show that $\mathit{traces}_\instan(\stateMachine) \cup \mathit{finTraces}_\instan(\stateMachine) = \mathit{traces}(\winningRegion_\instan) \cup \mathit{finTraces}(\winningRegion_\instan)$, where $\mathit{finTraces(\cdot)}$ and $\mathit{finTraces}_\instan(\cdot)$ are the sets of finite prefixes of the traces in $\mathit{traces(\cdot)}$ and $\mathit{traces}_\instan(\cdot)$, respectively.
	From this, the original statement of the lemma follows.
	We therefore have to show that every (finite or infinite) run $r_\stateMachine$ of $\stateMachine$ can be matched with a (finite or infinite) run $r_{\winningRegion_\instan}$ of $\winningRegion_\instan$ (and vice versa) such that $\mathit{trace}_\instan(r_\stateMachine) = \mathit{trace}(r_{\winningRegion_\instan})$.
	For the first direction of the equality, we show that every transition of $r_\stateMachine$ can be either matched with a transition in $\winningRegion_\instan$ or constitutes a step which is removed in the trace $t_\instan = \mathit{trace}(r_\stateMachine)_\instan$.
	For the other direction, we show that every transition of $r_{\winningRegion_\instan}$ can be matched with a transition in $\stateMachine$.
	Assume $\stateMachine$ is currently in state $s = (f_1, \ldots, f_n)$ of $r_\stateMachine$ and $\winningRegion_\instan$ is in state $s_\winningRegion$ of $r_{\winningRegion_\instan}$.
	We keep the invariant that $s_\winningRegion \in K_\instan = \bigcap_{1 \leq j \leq n}{K_j(f_j(P_j[\instan]))}$.
	\begin{itemize}
		\item
		For the first direction, assume that the next transition of $\stateMachine$ is $(s, A, s')$ with $s' = (f'_1, \ldots, f'_n)$. Let $B \subseteq \ap$ be such that $B[\instan] = A_{\instan}$.
		By construction of $\stateMachine$, there is an instance $\instan'$ and parameter subset $P_i$ such that there is exactly one $\pmethod{f}{$P_i[\instan']$} \in A$.
		Let $B' \subseteq \ap$ be such that $B'[\instan'] = A_{\instan'}$.
		We distinguish two cases.
		\begin{itemize}
			\item Assume $P_i[\instan] = P_i[\instan']$.
			Let $s_i = f_i(P_i[\instan'])$ and $s'_i = f'_i(P_i[\instan'])$.
			Let $G_{(s_i, B', s_i')} = \set{s \mid (s_i, B' \cup\set{\texttt{in\_}s}, s_i') \in \delta_i}$.
			Let $P_{j_1}, \ldots P_{j_l}$ be all subsets of $P_i$ and
			$K = K_i(s_i) \cap \bigcap_{1 \leq k \leq l}{K_{j_k}(f_{j_k}(P_{j_k}[\instan']))}$.
			By definition of the independence check employed to construct $\stateMachine$, $K \subseteq G_{(s_i, B', s_i')}$.
			As $P_i[\instan] = P_i[\instan']$, we have that $K_\instan \subseteq K$ and thus also $K_\instan \subseteq G_{(s_i, B', s_i')}$. Therefore, by the invariant, $s_\winningRegion \in G_{(s_i, B', s_i')}$ and there is a transition $(s_\winningRegion, B', s'_\winningRegion)$ in $\winningRegion$.
			
			Now, if $\instan' = \instan$, there is a transition $(s_\winningRegion, B[\instan], s'_\winningRegion)$ in $\winningRegion_\instan$, what shows that the step in $\stateMachine$ can be mirrored in $\winningRegion_\instan$.
			Otherwise, if $\instan' \neq \instan$, $B$ and $B'$ agree on the propositions parameterized with $P_i$.
			By construction of $\stateMachine$, $B$ has all inputs not parameterized with $P_i$ set to $\false$ and only self-updates of cells not parameterized with $P_i$ set to $\true$. 
			By Requirements~\ref{req:updates} and \ref{req:inputs}, since there is a transition $(s_\winningRegion, B', s'_\winningRegion)$ in $\winningRegion$, there is also a transition $(s_\winningRegion, B, s'_\winningRegion)$ in $\winningRegion$ and therefore a transition $(s_\winningRegion, B[\instan], s'_\winningRegion)$ in $\winningRegion_\instan$.
			
			All that remains to show is that the invariant is preserved by both transitions $(s_\winningRegion, B[\instan], s'_\winningRegion)$ and $(s_\winningRegion, B'[\instan'], s'_\winningRegion)$. That means that we need to show that $s'_\winningRegion \in K'_\instan$ for $ K'_\instan = \bigcap_{1 \leq j \leq n}{K_j(f'_j(P_j[\instan]))}$ in either case.
			Since $s_\winningRegion \in K_\instan$, we have that for every $j$, $s_\winningRegion \in K_j(f_j(P_j[\instan]))$. 
			Furthermore, $\pmethod{f}{$P_i$} \in B$ and $\pmethod{f}{$P_i$} \in B'$. 
			For $j \neq i$ we have that $f'_j = f_j$ and, therefore, by \Cref{prop:knowledge}, $s'_\winningRegion \in K_j(f'_j(P_j[\instan]))$. 
			Furthermore, $K_\instan \subseteq G_{(s_i, B', s_i')}$ and $(s_i, C \cup \set{\texttt{in\_}{s_\winningRegion}}, s_i') \in \delta_i$, for both $C = B$ and $C = B'$. 
			Therefore, again by \Cref{prop:knowledge}, $s'_\winningRegion \in f_i[P_i[\instan] \mapsto s'_i]$. Thus, $s'_\winningRegion \in K_j(f_j(P_j[\instan]))$ for all $j$ and the invariant holds for $s'_\winningRegion$.
			
			\item Assume $P_i[\instan] \neq P_i[\instan']$. By construction of $\stateMachine$, there is no $\pmethod{g}{$P_j$} \in B$.
			As $\winningRegion$ satisfies \Cref{req:updates}, so does every $\wini{i}$. Together with how $\stateMachine$ is constructed, it follows that $B_{|O} \subseteq \selfU$.
			Thus, by definition of $t_\instan$, the transition will not appear in $t_\instan$ and $\winningRegion_\instan$ thus stays in its current state.
			In remains to show that the invariant is maintained by the transition in $\stateMachine$. 
			Since $P_i[\instan] \neq P_i[\instan']$, we have by definition of $\stateMachine$ that $f_j(P_j[\instan]) = f'_j(P_j[\instan])$ for all $j$, and therefore the invariant is maintained.
		\end{itemize}
		
		\item For the second direction, assume that the next transition of $\winningRegion_\instan$ is $(s_\winningRegion, B[\instan], s'_\winningRegion)$ for some $B \subseteq \ap$.
		By construction of $\winningRegion$, there is a parameter subset $P_i$ such that some $\pmethod{f}{$P_i$} \in B$.
		By our invariant, $s_\winningRegion \in K_i(s_i)$, 
		therefore, by \Cref{prop:knowledge2}, there is a transition $(s_i, B \cup \set{\texttt{in\_}s_\winningRegion}, s'_i) \in \delta_i$.
		By construction of $\stateMachine$, there is a transition from $(f_1, \ldots, f_n)$ to $(f'_1, \ldots, f'_n)$ such that $f'_i(P_i[\instan]) = s'_i$. Furthermore, the transition is labeled with $A$ such that $A_{\instan} = B[\instan]$.
		
		It remains to show that the invariant is maintained by the transition. For $j \neq i$, $f'_j = f_j$ and therefore, by \Cref{prop:knowledge}, $s'_\winningRegion \in K_j(f_j(P_j[\instan]))$. Furthermore, since $(s_i, B \cup \set{\texttt{in\_}s_\winningRegion}, s'_i) \in \delta_i$, again by \Cref{prop:knowledge}, it holds that $s'_\winningRegion \in K_i(s'_i)$ and therefore, by definition of $f'_i = f_i[P_i[\instan] \mapsto s'_i]$, $s'_\winningRegion \in K_i(f'_i(P_i[\instan]))$. Therefore, the invariant is maintained for the transition.\qedhere
	\end{itemize}

\section{Description of Benchmarks}
\label{a:benchmarks}
\begin{itemize}[leftmargin=20pt]
	\item \emph{Asset Transfer} (not parameterized, reference contract: \cite{AssetTransferReference}). This contract originates from Microsoft's Azure Blockchain Workbench~\cite{Azure}. It has been formulated similarly in~\cite{DBLP:conf/vstte/0001LCPDBNF19,dharanikota2021celestial}. 
	\item \emph{Blinded Auction} (not parameterized, reference contract: \cite{BlindedAuctionReference}). The blinded auction protocol, following~\cite{DBLP:conf/fc/MavridouL18}, describes how to sell an item to the highest bidder. The bids of all bidders are hashed and are only revealed after the auction closed.
	\item \emph{Coin Toss} (not parameterized, reference contract: \cite{CoinTossReference}). In this protocol, following~\cite{Azure}, two users bet, together with a wager, on whether a coin toss results in heads or tails. After reaching a time limit, the winner receives both wagers.
	\item \emph{Crowd Funding} (not parameterized, reference contract: \cite{CrowdFundingReference}). In this contract, following~\cite{DBLP:journals/corr/abs-1906-02906}, users can donate coins until a time limit is reached. If the funding goal is reached by the end, the owner retrieves the donations. Otherwise, the users can reclaim their donations.
	\item \emph{ERC20 Token System} (parameterized, reference contract: \cite{ERC20Reference}). We consider a classical ERC20 token system following the Open Zeppelin documentation~\cite{ERC20Spec} but without pausing feature.
	\item \emph{ERC20 Token System Extended} (parameterized). This is the contract described in this paper which extends the first ERC20 token system with different methods for pausing.
	\item \emph{NFT Auction} (parameterized, reference contract: \cite{NFTReference}). This is a comparatively large contract from Avolabs that implements an NFT auction combined with a buy-now feature. The original implementation has over 1400 lines of code. We specified the main requirements on the control flow as described in the \texttt{README} of Avolabs' GitHub.
	\item \emph{Simple Auction} (not parameterized, reference contract: \cite{SimpleAuctionReference}). The simple auction protocol is a common example for smart contract control flows, see, e.g.,~\cite{SimpleAuction}. It is similar to the blinded auction without the bids being hashed.
	\item \emph{Ticket System} (not parameterized, reference contract written by the authors of this paper). The ticket system contract describes the selling process of tickets. As long as tickets are available and the sale has not been closed, users can buy tickets. Users can return their tickets while the ticket sale is still open. 
	\item \emph{Voting} (parameterized, reference contract: \cite{VotingReference}). This is a simple voting contract which enforces that every voter vote only votes once. Additionally, we let the owner close the contract only after a fixed number of votes is reached.
\end{itemize}

\end{document}